\documentclass[sigconf]{acmart}
\usepackage{subcaption}
\usepackage{xcolor}
\usepackage{dblfloatfix}



\usepackage{amsmath}
\DeclareMathOperator*{\argmax}{arg\,max}
\newcommand{\abs}[1]{\left| #1 \right|}

\usepackage{algorithm}
\usepackage{algpseudocode}

\AtBeginDocument{%
  }

\setcopyright{acmlicensed}
\copyrightyear{2027}
\acmYear{2027}
\acmDOI{XXXXXXX.XXXXXXX}
\acmConference[ACM KDD '27]{Proceedings of the 33rd ACM SIGKDD Conference on Knowledge Discovery and Data Mining}{August 01--05,
  2027}{San Jose, CA}
\acmISBN{978-1-4503-XXXX-X/2018/06}

\begin{document}

%%
%% The "title" command has an optional parameter,
%% allowing the author to define a "short title" to be used in page headers.
\title{Adversarial Data Modeling in Epidemiology}

%%
%% The "author" command and its associated commands are used to define
%% the authors and their affiliations.
%% Of note is the shared affiliation of the first two authors, and the
%% "authornote" and "authornotemark" commands
%% used to denote shared contribution to the research.
\author{Yiqi Su}
\email{yiqisu@vt.edu}
% \orcid{1234-5678-9012}
\affiliation{%
  \institution{Virginia Tech}
  \city{Alexandria}
  \state{VA}
  \country{USA}
}

\author{Christo K. Thomas}
\email{cthomas2@wpi.edu}
% \orcid{1234-5678-9012}
\affiliation{%
  \institution{Worcester Polytechnic Institute}
  \city{Worcester}
  \state{MA}
  \country{USA}
}

\author{Walid Saad}
\email{walids@vt.edu}
% \orcid{1234-5678-9012}
\affiliation{%
  \institution{Virginia Tech}
  \city{Alexandria}
  \state{VA}
  \country{USA}
}

\author{Sanmay Das}
\email{sanmay@vt.edu}
% \orcid{1234-5678-9012}
\affiliation{%
  \institution{Virginia Tech}
  \city{Alexandria}
  \state{VA}
  \country{USA}
}

\author{Bud Mishra}
\email{bud.mishra@gmail.com}
% \orcid{1234-5678-9012}
\affiliation{%
  \institution{New York University}
  \city{New York}
  \state{NY}
  \country{USA}
}

\author{Naren Ramakrishnan}
\email{naren@vt.edu}
\correspondingauthor
\affiliation{%
  \institution{Virginia Tech}
  \city{Alexandria}
  \state{VA}
  \country{USA}
}

%%
%% By default, the full list of authors will be used in the page
%% headers. Often, this list is too long, and will overlap
%% other information printed in the page headers. This command allows
%% the author to define a more concise list
%% of authors' names for this purpose.
\renewcommand{\shortauthors}{Su et al.}

%%
%% The abstract is a short summary of the work to be presented in the
%% article.
\begin{abstract}
Epidemiological models increasingly rely on crowdsourced, self-reported behavioral data such as vaccination status, mask usage, and social distancing adherence. This data, however, is not passively sampled but instead strategically reported, making it a canonical case of adversarial input to a data mining pipeline. Individuals misreport for various reasons, e.g., to avoid penalties, to access benefits, or to express distrust in public health authorities. We introduce a data-modeling framework that casts the interaction between the population and a public health authority as a signaling game.
This approach provides both a generative model of strategically-corrupted behavioral data and a mechanism for the receiver to recover reliable signal from it. 
Individuals (senders) choose how to report their behaviors, while the public health authority (receiver) updates their epidemiological model(s) based on potentially distorted signals, and modifies its trust in incoming reports accordingly. Focusing on deception around masking and vaccination, we characterize analytically game equilibrium outcomes as distinct regimes of data corruption, and evaluate the degree to which deception can be tolerated while maintaining epidemic control through policy interventions.
In large scale simulations, our results show that even under pervasive dishonesty in pooling equilibria, well-designed sender and receiver strategies can still maintain effective epidemic control. Real-world validation further shows that behavioral distortions often exhibit structured patterns rather than arbitrary noise. This work advances the understanding of adversarial data in epidemiology and offers tools for designing more robust public health models in the presence of strategic user behavior.

\end{abstract}

%%
%% The code below is generated by the tool at http://dl.acm.org/ccs.cfm.
%% Please copy and paste the code instead of the example below.
%%
% \begin{CCSXML}
% <ccs2012>
%  <concept>
%   <concept_id>00000000.0000000.0000000</concept_id>
%   <concept_desc>Do Not Use This Code, Generate the Correct Terms for Your Paper</concept_desc>
%   <concept_significance>500</concept_significance>
%  </concept>
%  <concept>
%   <concept_id>00000000.00000000.00000000</concept_id>
%   <concept_desc>Do Not Use This Code, Generate the Correct Terms for Your Paper</concept_desc>
%   <concept_significance>300</concept_significance>
%  </concept>
%  <concept>
%   <concept_id>00000000.00000000.00000000</concept_id>
%   <concept_desc>Do Not Use This Code, Generate the Correct Terms for Your Paper</concept_desc>
%   <concept_significance>100</concept_significance>
%  </concept>
%  <concept>
%   <concept_id>00000000.00000000.00000000</concept_id>
%   <concept_desc>Do Not Use This Code, Generate the Correct Terms for Your Paper</concept_desc>
%   <concept_significance>100</concept_significance>
%  </concept>
% </ccs2012>
% \end{CCSXML}

% \ccsdesc[500]{Do Not Use This Code~Generate the Correct Terms for Your Paper}
% \ccsdesc[300]{Do Not Use This Code~Generate the Correct Terms for Your Paper}
% \ccsdesc{Do Not Use This Code~Generate the Correct Terms for Your Paper}
% \ccsdesc[100]{Do Not Use This Code~Generate the Correct Terms for Your Paper}

\begin{CCSXML}
<ccs2012>
   <concept>
       <concept_id>10010147.10010178.10010187.10010190</concept_id>
       <concept_desc>Computing methodologies~Probabilistic reasoning</concept_desc>
       <concept_significance>500</concept_significance>
       </concept>
   <concept>
       <concept_id>10010405.10010444.10010449</concept_id>
       <concept_desc>Applied computing~Health informatics</concept_desc>
       <concept_significance>500</concept_significance>
       </concept>
   <concept>
       <concept_id>10010147.10010341.10010342</concept_id>
       <concept_desc>Computing methodologies~Model development and analysis</concept_desc>
       <concept_significance>500</concept_significance>
       </concept>
 </ccs2012>
\end{CCSXML}

\ccsdesc[500]{Computing methodologies~Probabilistic reasoning}
\ccsdesc[500]{Computing methodologies~Model development and analysis}
\ccsdesc[500]{Applied computing~Health informatics}

%% A "teaser" image appears between the author and affiliation
%% information and the body of the document, and typically spans the
%% page.

%%
%% This command processes the author and affiliation and title
%% information and builds the first part of the formatted document.
\maketitle

\section{Introduction}
Epidemiological modeling increasingly depends on large-scale, crowdsourced behavioral data to track and forecast disease transmission dynamics in real time \cite{banerjee2022,BergstromRS2002}. Beyond clinical surveillance and mobility records, self-reported behavioral information on individual vaccination compliance, mask usage, and adherence to social distancing has become central to capturing the effects of non-pharmaceutical interventions (NPIs) \cite{bff2025}. 

% \noindent 
However, empirical studies across diverse contexts demonstrate large and persistent gaps between self-reported and directly observed behaviors. For example, a national study in Kenya found that while only 12\% of individuals reported not wearing masks, direct observation revealed that nearly 90\% were non-compliant, a discrepancy exceeding 75 percentage points that was consistent across demographic groups \cite{kenya2021}. Similar survey-based studies in countries like the USA document substantial social-desirability and recall biases in self-reported masking and vaccination behavior, compounded by spatial heterogeneity, policy variation, and limited auditability of daily behaviors \cite{spatiotemporal}. While vaccination records are more structured and exhibit higher fidelity than other NPI measures, they nevertheless exhibit persistent misreporting rates of approximately 5–15\%, driven by reporting delays, incomplete registries, and misclassification errors~\cite{HURLEY2026128101,DALEY20242740,IRVING20096546}.

Moreover, a persistent and underexamined challenge lies in the strategic nature of self-reported data. Individuals may misreport or withhold behavioral information to avoid penalties, retain access to workplaces or social benefits, or due to distrust in public health authorities (PHA) ~\cite{levy2022,shiman2023}.  
In such settings, reported data 
can no longer be treated as passively sampled observations, but instead constitute strategic signals.
This misalignment between reported and actual behavior can introduce systematic underestimation of key parameters and impair NPI effectiveness, as demonstrated by recent studies~\cite{WuNature2020,PullanoNature2021,MilanesiArxiv2023}. Unfortunately, state-of-the-art models largely fail to account for strategic misreporting by the population~\cite{PatelJTB2005} %, TonkensECC2021} 
and the few studies that do acknowledge parameter uncertainty
(e.g.,~\cite{TannerMB2018}) 
view it solely as an estimation problem rather than as a dynamic behavioral process that can be explicitly modeled.

% In such regimes, studying the impact of strategical misreporting is critical for understanding how much distorted behavioral data can be tolerated before epidemic control degrades, and for designing robust public health decision frameworks that explicitly account for strategic or socially driven misreporting rather than assuming truthful compliance.

To address these issues, we propose a game-theoretic framework that explicitly models the interaction between individuals and a cognizant public health authority (hereinafter, referred to as PHA) as a \emph{signaling game}~\cite{SobelGT2020}. In our formulation, each individual (sender) chooses whether and how to report their health behavior (e.g., vaccination, masking), while the PHA (receiver) interprets these signals to estimate key epidemiological parameters and adjusts its recommendations accordingly. Crucially, both players act under bounded rationality and operate with partial information, yielding a dynamic interplay between strategic communication and epidemic control. This approach integrates rational speech act (RSA) models~\cite{DegenARL2023}, which formalize communication as recursive reasoning between senders and receivers with epidemiological compartmental dynamics, offering a principled way to integrate behavioral information into epidemic forecasting.  Our key contributions are:
\begin{itemize}
    \item \textbf{Game-theoretic epidemic model.} We couple a stochastic compartmental model with a two-player signaling game, allowing the PHA to infer credibility and account for strategic misreporting rather than assuming truthful reporting, and to design policies that remain effective even when behavioral observations are adversarial.
    
    \item \textbf{Equilibria characterization and policy design.} 
    By characterizing the signaling-game equilibria, we identify when truthful reporting can arise and how policy levers can be tuned to deter strategic misreporting while preserving predictive fidelity. Building on these insights, we design an adaptive feedback policy that weights reported data by inferred credibility to maintain the controlled reproduction number $R_c \leq 1$.
    
    % \item \textbf{Quantitative impact of misreporting.} We characterize through Monte Carlo experiments how much deception can be tolerated and demonstrate that there exists a tolerance frontier specifying how much misreporting can be absorbed before control is lost.

    \item \textbf{Generalizable simulation pipeline.} Rather than focusing on epidemic forecasting, we develop a generalizable modeling and simulation framework that integrates signaling-game equilibria with epidemiological dynamics to study how behavioral deception interacts with policy adaptation and epidemic control. The framework enables systematic evaluation of intervention strategies under varying behavioral and epidemiological assumptions, advancing the study of adversarial data in epidemiology and supporting robust public-health decision-making in the presence of strategic user behavior.

\end{itemize}

\section{Related Work}
\paragraph{Compartmental epidemiological models.} Classical models such as SEIR have been extended with compartments for vaccination ($V$) and asymptomatic transmission ($A$), yielding SVEAIR models effective at reproducing COVID-19 dynamics~\cite{ChoiJTB2020}.
Recent advances embed epidemiological models in neural ODE/SDE frameworks with transmission graphs and uncertainty-aware control~\cite{thiagarajan2022machine,EARTH2025,nSDE2025}.
Our approach builds on this line of research, adding process noise, weekly time–steps, and explicit feedback from reported behavior to policy, thereby closing the loop between epidemiology and decision making.

\paragraph{Measurement error and strategic misreporting.}
Self–reported health data suffer from both random noise and strategic bias.  
Empirical research demonstrates that self-reported health behaviors such as vaccination status, mask-wearing compliance, and infection history are frequently misreported, especially when disclosure could affect employment, travel, or social standing~\cite{Battaglia1997,Merrell2024}.
Statistical corrections (e.g.\ weighting, imputation) address random error but struggle with incentive-driven deception. Our work complements these efforts by modeling misreporting endogenously and quantifying how it degrades policy performance across incentive regimes.

\paragraph{Signaling games and equilibrium selection.}
Information asymmetries in public health have often been modeled as signaling or screening games where individuals send costly messages and the PHA responds with incentives or restrictions.
Equilibrium refinements (separating, pooling, mixed) determine whether truth telling, deception, or mixed strategies persist.
Beyond traditional one–shot or static analyses, recent work has examined, in other domains, strategic disclosure under multiple alternatives~\cite{AAAI2024Disclosure} and learning-induced dynamics that deviate from Nash equilibria~\cite{IJCAI2023MultiMemory}.
In our work,
we study the joint evolution of equilibrium strategies, epidemic dynamics, and responsive policy measures.

\begin{figure}%[htbp!] 
    \centering
    \includegraphics[width=.9\columnwidth]{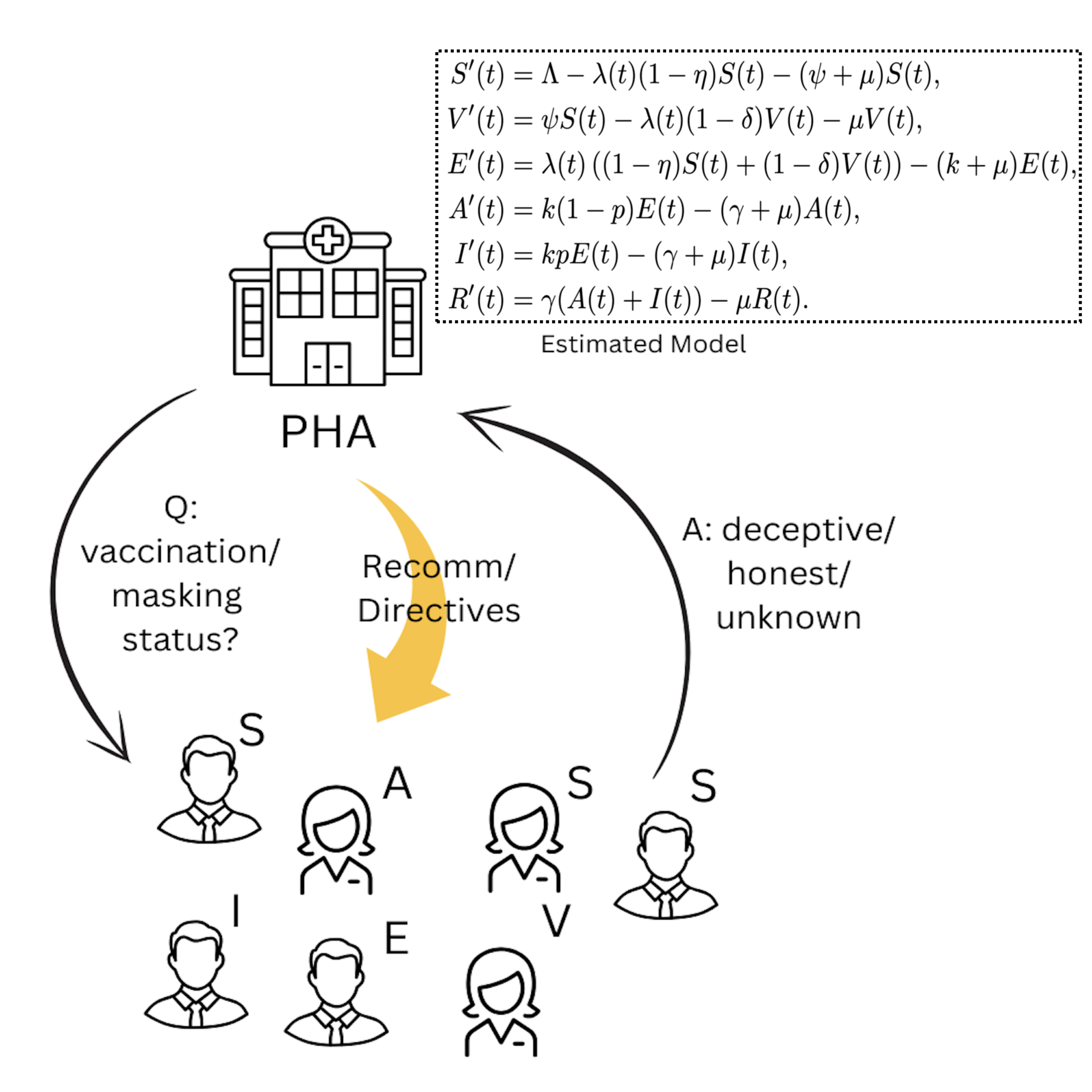}
    % \vspace{-4mm}
    \caption{Illustration of the signaling game framework.}
    \label{fig:System_Model} % \vspace{-4mm}
\end{figure}

\section{Framework}
\paragraph{Basics.} We model interactions between the PHA and individuals as a two-way conversation, as illustrated in Figure~\ref{fig:System_Model}.
In this model, the PHA queries individuals about their vaccination and masking status. A subset of people respond, with their answers reflecting individual behaviors (honesty or deception). Besides this information (which is not necessarily accurate), the
only ground truth information available to the PHA is the number of hospitalizations (which is typically obtained by aggregating across all clinics and hospitals in its jurisdiction). It is reasonable to assume that the number of people hospitalized is equal to a fraction of the number of people infected by the virus~\cite{FoxPNAS2022}. Details of such estimation and modeling of other compartments in the SVEAIR model are covered later.
Based on the response from population about their masking and vaccination, the PHA updates its epidemiological model and provides recommendations to the public on protective strategies to adopt.

\paragraph{Epidemiological model.}
We cast the interaction between the population and the PHA as a two-player signaling game $\mathcal{G}$, in which individual behaviors (types), reported responses (messages), and PHA strategies (actions) are drawn from discrete sets. Let $t$ denote time. 
In $\mathcal{G}$, the sender $T$ is a group of individuals $\mathcal{K}$, comprising $K(t)$ members, each in one of several epidemiological states: susceptible $(S)$, vaccinated $(V)$, exposed $(E)$, symptomatic $(I)$, asymptomatic $(A)$, or recovered $(R)$. Individuals send messages $m \in \mathcal{M}$ to the receiver $R$, with $\mathcal{M}$ denoting the feasible set of messages. The receiver $R$ represents the PHA’s epidemiological model, which tracks the evolution of these states through a system of ODEs~\cite{ChoiJTB2020}. 
The epidemiological model, essentially a variant of the SVEAIR approach, is
described in Appendix~\ref{supp:epi_model} and
illustrated in 
Figure~\ref{fig:sveair}. 
It is important to note the vaccination rate $\psi$ and the masking rate $\eta$ in this model, both of which reduce the contact rate.

%Next, we discuss in detail the specifics of signaling game. For notational convenience, we may omit the time index $t$; however, time dependency is understood to be implicit in the formulation.

%structure described below thus forms the foundation of the signaling game, whose strategic implications will be elaborated in the next section.

\begin{figure}[t]
\centering
\includegraphics[width=.95\columnwidth]{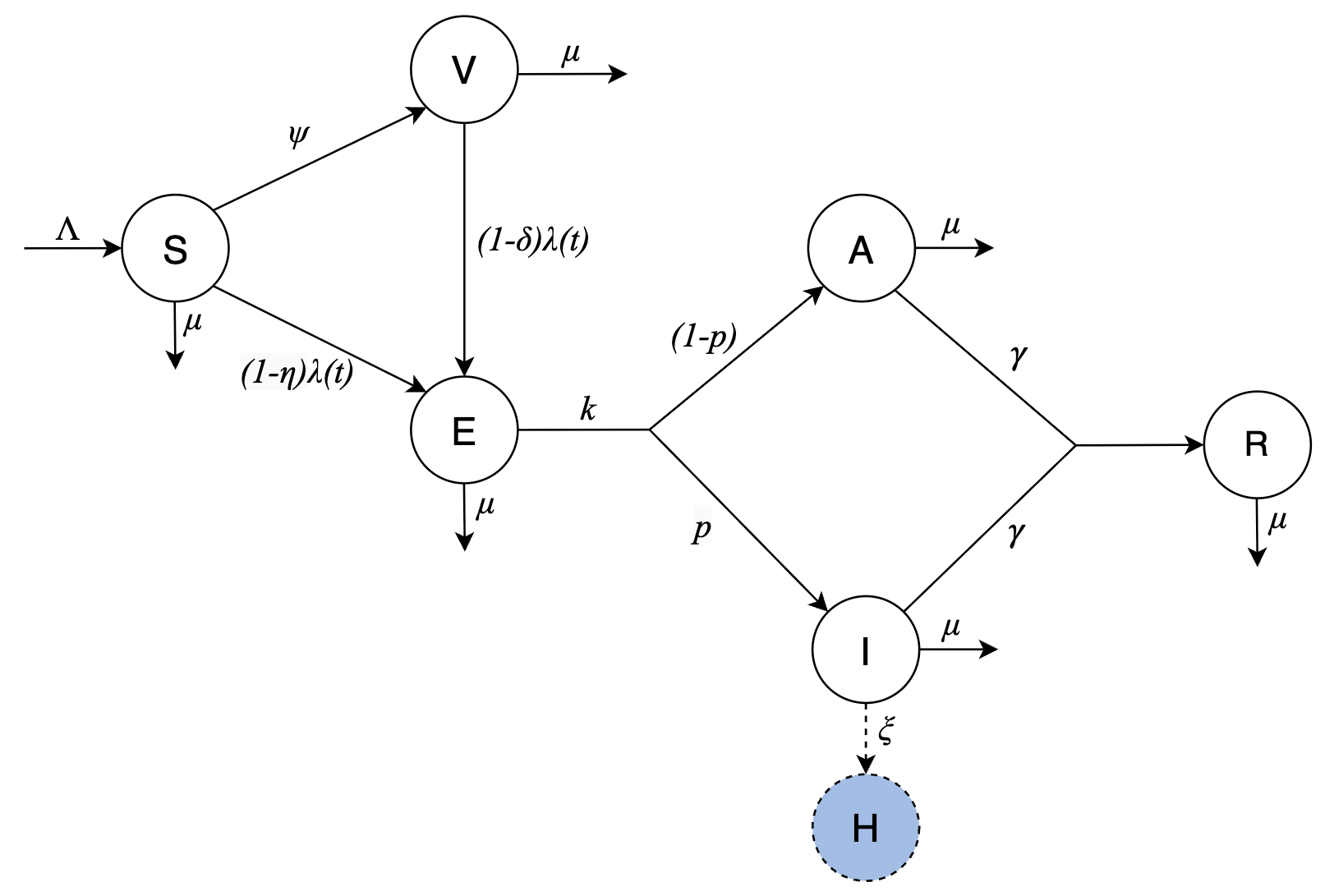} % Slightly narrower than full column
% \vspace{-2mm}
\caption{Compartmental SVEAIR epidemiological model with vaccination and masking. The `H' is not a separate compartment but denotes (observed) hospitalization numbers, typically a fraction of the true numbers from the infected compartment.}
\label{fig:sveair}
% \vspace{-4mm}
\end{figure}

\paragraph{Realistic assumptions for vaccination and masking.}
In the weekly interactive loop between individuals and the PHA (see details in Appendix~\ref{supp:weekly-loop}), masking compliance is defined based on self-reported frequency over the prior 5–7 days, classifying “all of the time” or “most of the time” as masking and other responses as non-masking, consistent with large-scale survey practice~\cite{spatiotemporal}. To account for the distinctions of vaccination and masking behaviours, the deception rates are pre-specified as 5-15\% and 40-80\%, respectively~\cite{kenya2021,Archambault2023}. 

\paragraph{Signaling game.}
%For notational convenience, we may omit the time index $t$; however, time dependency is understood to be implicit in the formulation.

For the signaling game $\mathcal{G}$, individuals communicate messages from a set $\mathcal{M}$, representing vaccination and masking status.
We encode messages as two bits $b_1b_2$, where $b_1=0$ ($1$) indicates ``Vaccinated'' (``Unvaccinated''), and $b_2=0$ ($1$) indicates ``Masking'' (``Not Masking'').
Thus, the message space is $\mathcal{M}=\{00,01,10,11\}$.
Note that $0$ denotes desirable behavior (i.e., desired by the PHA) on the part of individuals.
We define an individual's behavioral type as $\theta_i=(\theta_i^v,\theta_i^m)\in\Theta$, where $\Theta=\{00,01,10,11\}$, $\theta_i^v$ denotes the true vaccination status, and $\theta_i^m$ denotes the true masking status. For example, $\theta_i=01$ indicates that an individual is vaccinated but does not mask.

Reported messages may, of course, not match actual behavior, e.g., an unvaccinated or non-masking individual may claim otherwise, allowing the
capture of deception. 
The sender strategy is therefore represented by $g(m_i\mid \theta_i)$, which denotes the probability that an individual with true behavioral type $\theta_i$ communicates message $m_i$.
For truthful individuals, $g(m_i=\theta_i\mid\theta_i)=1$, while deceptive individuals may choose messages that provide higher individual utility.
%While we restrict attention to vaccination and masking, the framework generalizes to richer behavioral states.

% Futher, we assume that only $M \leq K$ individuals in group $\mathcal{K}$ report their status. Individual types are represented by $\boldsymbol{c} = [c_1,\cdots,c_K]^T$, where $c_i=1$ denotes honesty and $c_i=0$ indicates deception. Let $g(m\mid c)$ denote the probability of sending message $m$ given type $c$. Since persistent dishonesty reduces payoffs (via penalties and social costs), deceptive individuals adopt probabilistic strategies balancing truth and misreporting. The receiver, in turn, forms beliefs about $\boldsymbol{c}$ to update the epidemiological model (Appendix~\ref{supp:epi_model}), where $\widehat{\psi}$ and $\widehat{\eta}$ denote estimated vaccination and contact rates, respectively. We now define sender and receiver payoffs.
We assume that only a subset of individuals $\mathcal{K}_c\subseteq\mathcal{K}$ responds to PHA survey, while the remaining individuals $\mathcal{K}_s=\mathcal{K}\setminus\mathcal{K}_c$ do not provide behavioral information. Then PHA forms posterior beliefs over behavioral types $q(\theta_i\mid m_i)$, which are used to estimate population-level behavioral parameters, including the vaccination rate $\widehat{\psi}$ and masking rate $\widehat{\eta}$.

\paragraph{Sender utility function.} Falsely claiming to be vaccinated may offer incentives, such as earning more in the workplace. Similarly, falsely stating compliance with mask-wearing may help avoid penalties imposed by the PHA. In short, while honesty about individual actions may support the societal goal of achieving zero infection rates, it could negatively impact individual payoffs. Let $I_v$ and $I_m$ denote the incentives associated with falsely claiming vaccination and masking compliance, respectively.
The sender utility is defined according to the reported message:
\begin{equation}
U_s = \sum_{k\in\mathcal K} U_s^{(k)}(\theta_k,m_k),
\label{eq:utility}
\end{equation}
where the individual-level utility is given by
% % \begin{equation}
% % \begin{aligned}
% % U_s^{(k)}(c_k, m_k) &= g\big(m_k = \text{``00''} \mid c_k = 0\big)(I_v + I_m) \\
% % & +\; g\big(m_k = \text{``01''} \mid c_k = 0\big) I_v \\
% % & +\; g\big(m_k = \text{``10''} \mid c_k = 0\big) I_m.
% % \end{aligned}
% % \label{eq_sender_payoff}
% % \end{equation}

% \begin{equation}
% \begin{aligned}
% U_s^{(k)}(\theta_k,m_k)
% =& 
% g(m_k=00\mid\theta_k)
% \left(
% \mathbf{1}(\theta_k^v=1)I_v + \mathbf{1}(\theta_k^m=1)I_m
% \right)
% \\
% &+
% g(m_k=01\mid\theta_k)
% \mathbf{1}(\theta_k^v=1)I_v
% \\
% &+
% g(m_k=10\mid\theta_k)
% \mathbf{1}(\theta_k^m=1)I_m .
% \end{aligned}
% \label{eq_sender_payoff}
% \end{equation}

\begin{equation}
\begin{aligned}
U_s^{(k)}(\theta_k) = 
\sum_{m_k\in\mathcal M}
g(m_k\mid\theta_k)
\Delta(\theta_k,m_k),
\end{aligned}
\label{eq_sender_payoff_modified}
\end{equation}

and
\begin{equation}
\begin{aligned}
\Delta(\theta_k,m_k)
=&
I_v \mathbf{1} (\theta_k^v=1, m_k^v=0) \\
& + I_m \mathbf{1}(\theta_k^m=1, m_k^m=0)
\end{aligned}
\label{eq:lie_gain}
\end{equation}
denotes the incentive obtained from falsely claiming compliance.

% \begin{equation}
% \begin{aligned}
% U_s^{(k)}(\theta_k)
% =
% \sum_{m_k\in\mathcal M}
% g(m_k\mid\theta_k)
% \Big[
% &I_v\mathbf{1}(\theta_k^v=1,m_k^v=0)
% \\
% &+
% I_m\mathbf{1}(\theta_k^m=1,m_k^m=0)
% \Big].
% \end{aligned}
% \label{eq_sender_payoff}
% \end{equation}

% \noindent
Here, to compute the utility, which is the incentive upon deceiving, we consider only individuals who are either not vaccinated or have not followed masking guidelines. For instance, \( g(m_k = ``00" \mid \theta_k=``11") \) denotes a scenario where the individual has neither adopted vaccination nor masking, yet falsely communicates that they have. 
%Similarly, \( g(m_k = ``01" \mid c_k=0) \) and \( g(m_k = ``10" \mid c_k=0) \) denotes the scenarios where the individual has not adopted vaccination and masking, respectively, yet falsely communicates that they have.

Maximizing only $U_s^{(k)}(\theta_k)$ leads to a trivial solution where the individuals always choose to be dishonest about their strategies, since that will bring them maximum payoffs. %However, such a strategy 
This leads to increased infection rates, closing down of businesses, and potentially punitive measures by the PHA.
Therefore, the sender in our model is interpreted as \emph{a representative population agent} rather than single strategic individuals. The population consists of a continuum of individuals facing identical incentives but potentially making heterogeneous reporting choices. Accordingly, the sender’s strategy $g(m \mid \theta)$ represents a population-level mixed strategy, i.e., the empirical distribution of reports generated by individuals under the same incentive structure. In this way, the sender communication strategies must represent a balance between maximizing $U_s^{(k)}(\theta_k)$ and minimizing these losses. 

We represent the losses as proportional to the negative of semantic accuracy, which is defined as the accuracy with which the receiver interprets the individual types from the messages communicated and written as:
\begin{equation}
U_L(p, q) = -\frac{1}{K} \sum\limits_{i \in \mathcal{K}} \sum\limits_{j \in \mathcal{M}} p_{ij} q_{ji},
\end{equation}
where $p_{ij} = g(m_j \mid \theta_i)$, representing the sender communication strategy, and $q_{ji} = q(\theta_i \mid m_j)$, representing the receiver posterior belief. For non-responsive individuals, we set $p_{ij} = q_{ji} = 0$.
The sender's objective is to find a communication policy $g(m_k \mid \theta_k)$ by maximizing the following objective:
% \vspace{-2mm}

\begin{equation}
\mathcal{P}_1:\left[g(m_k \mid \theta_k)^{\ast}\right]_{\forall k} = \argmax\limits_{g(m_k \mid \theta_k),\, \forall k} \left( U_s - \lambda_1 U_L \right),
\end{equation}
where $\lambda_1$ is a weighting factor that represents an optimal trade-off between maximizing $U_s$ and minimizing $U_L$.

Next, to  understand  the   implications of user    misreporting, we define the control reproduction number $R_c$ as the rate at which a person can pass on the infection. The expression for $R_c$ follows  directly from \cite{ChoiJTB2020}, as
\begin{equation}
    R_c(\psi,\eta) = \underbrace{\frac{\beta k}{(k+\mu)(\gamma + \mu)}}_{R_0}\frac{(1-\delta)\psi + (1-\eta)\mu}{\psi + \mu}.
\label{eq_Rc}
\end{equation} 
A value of $R_c(\psi,\eta)>1$ represents endemic equilibrium, where the disease remains endemic in the population. Here $\beta = \beta_0(p+b(1-p))$  indicates the weighted transmission rate considering both symptomatic and asymptomatic
cases. Eq.~\eqref{eq_Rc} means that as  $R_c(\psi,\eta)$ increases, it may have severe consequences of the functioning of business, schools and other institutions. This will adversely impact the payoffs of each individual, in terms of loss of job or restrictions to the number of employees that can work, thereby reducing their salary or benefits. In practice, the reproduction number can be estimated using the available observations  \cite{ShiArxiv2025} by the PHA, such as hospitalization. 
To capture this effect, we modify the sender utility using a discount factor
\begin{equation}
\begin{aligned}
\widetilde{U}_s^{(k)}(\theta_k)
=
e^{-aR_c(\psi,\eta)}
U_s^{(k)}(\theta_k).
\end{aligned}
\label{eq_sender_payoff_modified}
\end{equation}
where $a$ is a constant economic coefficient that controls the slope of  the exponential  curve.
% \begin{equation}
%     \begin{aligned}
%     U_s^{(k)}(\theta_k,m_k) &= g(m_k = ``11"\mid \theta_k) e^{-aR_c(\psi,\eta)}(I_v+I_m) \\ & + g(m_k = ``10"\mid \theta_k)e^{-aR_c(\psi,\eta)}I_{v} \\ &+ g(m_k = ``01"\mid \theta_k) e^{-aR_c(\psi,\eta)}I_m,
%     \label{eq_sender_payoff_modified}
%     \end{aligned}
% \end{equation}
% where $a$ is the economic factor, which is a constant coefficient that controls the slope of  the exponential  curve.
As per Eq.~\eqref{eq_sender_payoff_modified}, higher $R_c(\psi,\eta)$ lowers individual payoffs. Stricter PHA measures thus reflect both rising infection rates and their economic impact on daily life. The modified sender payoffs ensure that sender utility depends not only on individual strategies but also on the PHA’s ability to predict and control the epidemic without disrupting normal activities.

%As per \eqref{eq_sender_payoff_modified}, as $R_c(\psi,\eta)$ increases, the individual payoffs reduce. Hence, when the PHA introduces stricter measures, that is indicative of the economic impact due to an exploding infection rate and hence on the day-to-day normal activities of the population. The modified sender payoffs ensure that $U_s^{(k)}$ depends not just on individual strategies but also on whether the PHA is able to accurately predict the evolution of epidemiology and hence bring it under control, without impacting the day to day activities of the population.

% \begin{figure}[t]
% \centering
% \includegraphics[width=.8\columnwidth]{AnonymousSubmission/LaTeX/sg.png}
% \caption{Illustration of signaling games.}
% \label{fig:sg}\vspace{-3mm}
% \end{figure}

\paragraph{Receiver utility function.}
The receiver's objective here is to model accurately the epidemiological ODEs in \eqref{eq_ODE} using the estimated values of $\boldsymbol{\phi}(t) = \{\widehat{\psi}(t), \widehat{\eta}(t), \widehat{\lambda}(t), \widehat{\gamma}(t)$. The estimated ODE by the PHA has the same structure as the ground-truth epidemiological model given in Appendix~\ref{supp:epi_model}. The estimation error will typically be given by $\abs{X(t) - \widehat{X}(t)}^2$ for $X \in {S,V,E,I,A,R}$. However, the PHA cannot compute this error directly due to lack of access to ground-truth ODEs (as noted earlier). Fortunately, it can estimate hospitalizations, which are assumed to be proportional to infections~\cite{FoxPNAS2022}. We therefore model the hospitalization count as:
$H(t) = \xi I(t)$,
where $\xi$ is the infection–hospitalization ratio (IHR).
% The receiver's objective is to minimize this error and thereby ensure that the protection strategies followed by it are enough to control the disease. 
For a given sender type and belief,  the hospitalization prediction error (also called distortion) can be written as:
% \begin{equation}
%     D(t) = \sum\limits_{\boldsymbol{c}} p(\boldsymbol{c} \mid \boldsymbol{m}(t)) 
%     \sum\limits_{\boldsymbol{\theta}(t)} p(\boldsymbol{\theta}(t) \mid \boldsymbol{c}) 
%     \left| H(t) - \widehat{H}(t) \right|^2.\label{eq_error}
% \end{equation}

\begin{equation}
D(t)
=
\sum_{\theta}
q(\theta\mid m(t))
\sum_{\boldsymbol{\phi}(t)}
q(\boldsymbol{\phi}(t)\mid\theta)
\left|
H(t)-\widehat H(t)
\right|^2 .
\label{eq_error}
\end{equation}

This distortion measures the inconsistency between observed hospitalization dynamics and the epidemiological trajectory inferred
from behavioral reports, with
$q(\boldsymbol{\phi}(t)\mid\theta)$ capturing the uncertainty in
epidemiological parameters associated with behavioral type $\theta$.
The receiver seeks beliefs that reduce uncertainty about population
behavior while maintaining bounded epidemiological distortion. Therefore,
the receiver optimization problem is formulated as

\begin{equation}
\begin{aligned}
\mathcal{P}_2:
\quad
\min_{q(\theta\mid m)}
&
-
\sum_{\theta,m}
P(m)
q(\theta\mid m)
\log q(\theta\mid m)
\\
\text{s.t.}
&
\sum_{\theta,m}
P(m)
q(\theta\mid m)
D(\theta)
\leq D^*,
\\
&
\sum_{\theta}
q(\theta\mid m)=1,
\quad
\forall m\in\mathcal M .
\end{aligned}
\label{eq:P2}
\end{equation}

where $P(m)$ denotes the empirical distribution of observed messages received by the PHA, and $D^*$ represents the maximum tolerable
hospitalization prediction distortion.
Based on the inferred behavioral parameters and the estimated epidemiological dynamics, the PHA updates its intervention recommendations through adaptive correction terms:

\begin{align}
\psi_r(t)
&=
\widehat{\psi}(t)
+
\widetilde{\psi}(t),
\\
\eta_r(t)
&=
\widehat{\eta}(t)
+
\widetilde{\eta}(t).
\label{eq:combined_params}
\end{align}

Here, $\widehat{\psi}(t)$ and $\widehat{\eta}(t)$ represent the vaccination and masking levels inferred from behavioral reports,
respectively. The correction terms
$\widetilde{\psi}(t)$ and $\widetilde{\eta}(t)$ compensate for inference errors by reducing the hospitalization prediction distortion. The
resulting quantities $\psi_r(t)$ and $\eta_r(t)$ are the intervention recommendations issued by the PHA and are used to drive the subsequent epidemic evolution.

To characterize the intervention level required for epidemic control, we derive the herd-immunity vaccination threshold from the controlled reproduction number. Specifically, epidemic control requires $R_c(\psi,\eta)<1$.
Solving Eq.~\eqref{eq_Rc} for the vaccination level yields

\begin{equation}
\psi_{HI}
=
\frac{
\mu
\left(
R_0(1-\eta)-1
\right)
}
{
1-R_0(1-\delta)
},
\label{eq:psi_HI}
\end{equation}

where $1<R_0<\frac{1}{1-\delta}$.
The threshold $\psi_{HI}$ represents the minimum vaccination level
required to suppress transmission under a given masking level $\eta$.

\paragraph{Game solution.}

%Shannon's rate-distortion theory \cite{CoverIT1991} aims to compute the solution to the following problem, which is the transmitter problem here:
%\begin{equation}
%    \mathcal{P}_1: R^{\ast}(D) = \min\limits_{-U\leq -U^{\ast}} \mathbb{I}(\boldsymbol{c};\boldsymbol{m})
%\end{equation}

Given the sender communication strategy, the receiver updates its belief about behavioral types using Bayesian inference. Following the rational speech act (RSA) framework, the receiver posterior is given by
\begin{equation} P_L(\theta\mid m) \propto P_S(m\mid\theta)q(\theta), \label{eq:rsa_listener} 
\end{equation} 
where $q(\theta)$ denotes the prior distribution over behavioral types and $P_S(m\mid\theta)$ represents the sender communication strategy. 
Conversely, given the receiver belief, a pragmatic sender chooses messages according to

\begin{equation} 
P_S(m\mid\theta) \propto e^ {\left( \tau \left(u_s(\theta,m) - \lambda_1 u_L(\theta,m) \right) \right)}, 
\label{eq:rsa_speaker} 
\end{equation} 
where $\tau$ controls the rationality of the sender, $u_s(\theta,m)$ denotes the message-level sender utility, and $u_L(\theta,m)$ denotes the semantic loss associated with inferring behavioral type $\theta$ from message $m$.

Given the sender strategy, the receiver posterior is obtained by solving
the constrained inference problem in Eq.~\eqref{eq:P2}. The receiver balances two objectives, i.e., reducing uncertainty in inferred behavioral states and maintaining consistency with observed epidemic dynamics.
The corresponding Lagrangian formulation is

\begin{equation}
\begin{aligned}
\mathcal{L}
=&
-\sum_{\theta,m}
P(m)P_L(\theta\mid m)
\log P_L(\theta\mid m)
\\
&
+\lambda_2
\left(
D(P_L)-D^*
\right),
\end{aligned}
\label{eq:Lagrangian_receiver}
\end{equation}

where $\lambda_2$ is the Lagrange multiplier controlling the trade-off between inference confidence and epidemiological distortion.
Solving this constrained optimization yields the receiver posterior

\begin{equation}
P_L(\theta\mid m)
=
\frac{
P_S(m\mid\theta)P(\theta)
\exp(-\lambda_2D(\theta))
}
{
\sum_{\theta'}
P_S(m\mid\theta')
P(\theta')
\exp(-\lambda_2D(\theta'))
},
\label{eq:posterior_solution}
\end{equation}

where $D(\theta)$ denotes the expected hospitalization prediction distortion associated with behavioral type $\theta$. This posterior preserves information contained in the received message
while down-weighting behavioral types whose implied epidemic dynamics
are inconsistent with observed hospitalization trajectories.

\section{Equilibrium Analysis}

%\subsubsection*{Scenario 1 - %$R_c(\psi,\eta) < 1$}

%\subsubsection*{Scenario %2-$R_c(\psi,\eta) > 1$} 
%\subsection*{Perfect Bayesian Nash Equilibrium}

%We define the signaling game as the tuple $\left(\mathcal{T},\mathcal{M},\mathcal{Q}\right)$, where $\mathcal{T}$ is the set of user types, $\mathcal{M}$ is the sender strategy space where $g(m\mid t)$ lies and $\mathcal{Q}$ is the strategy space of receiver where the learned distributions $\mathcal{Q}$ lies. A perfect Bayesian Nash equilibrium (NE) of the signaling game $\mathcal{G}$ is a strategy profile $()$ and posterior beliefs of the receiver about individual types $q(\boldsymbol{t})$ such that:
%\begin{equation}
%\begin{align}
%g^{\ast}(m\mid \boldsymbol{c}) & \in E_s \\
%\{\tilde{\psi},\tilde{\eta},p(\boldsymbol{\theta}(t)\mid \boldsymbol{m}(t))\} &\in \max\limits_{q}\,\,  %E_v(q_v) + E_{nv}(q_{nv})+ E_m(q_m) \\ &  E_{nm}(q_{nm}) +\alpha \mathbb{I}(\boldsymbol{\theta}(t);\boldsymbol{m}(t)) \\
%\mbox{s.t.}\,\, & \bar{e}(t)\leq %\epsilon
%\end{align}
%\end{equation}

\paragraph{Pooling equilibrium.}

% A pooling equilibrium~\cite{HutteggerPNAS2014} is achieved when all the non-compliant are dishonest. In this case, the messages communicated do not convey any information about the strategies of the population. The PHA will have  to purely rely  on its  prior  information,  which  is  the hospitalization   number  $H(t)$.

A pooling equilibrium~\cite{HutteggerPNAS2014} is achieved when all the non-compliant individuals are dishonest. In this case, the messages communicated do not convey any information about the true behavioral strategies of the population, and the PHA must rely solely on prior information and aggregate observations, such as the hospitalization signal $H(t)$.

\begin{theorem}
\label{thm:pooling}
Suppose all non-compliant individuals strictly prefer misreporting over truthful reporting for both vaccination and masking, i.e.,
\[
U_s^{\text{lie}} > U_s^{\text{truth}}
\]
for all non-compliant types. Then there exists a pooling equilibrium in which all individuals report full compliance, i.e., $m_k = 00$ for all $k\in\mathcal{K}_c$, and the receiver’s posterior beliefs coincide with the prior distribution over types.
\end{theorem}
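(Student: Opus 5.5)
The plan is to construct the candidate profile explicitly and check the perfect Bayesian equilibrium conditions directly. The sender strategy is $g^\ast(00\mid\theta)=1$ for every $\theta\in\Theta$: compliant type $00$ reports truthfully, and every non-compliant type ($01,10,11$) reports full compliance. The receiver belief is $q^\ast(\theta\mid 00)=q(\theta)$, the prior, obtained from Bayes' rule in Eq.~\eqref{eq:rsa_listener}. Off-path messages $m\neq 00$ get a belief that puts all mass on the non-compliant types consistent with $m$, for instance on $\theta=m$. The proof then checks three things in order: Bayes consistency on path, receiver optimality, and sender optimality.

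\emph{Bayes consistency.} Since $P_S(00\mid\theta)=1$ for all $\theta$, Eq.~\eqref{eq:rsa_listener} gives
\[
P_L(\theta\mid 00)=\frac{q(\theta)}{\sum_{\theta'}q(\theta')}=q(\theta).
\]
So the posterior coincides with the prior and the message is uninformative. When the distortion-tilted form in Eq.~\eqref{eq:posterior_solution} is used, the same argument goes through. Because the likelihood is constant in $\theta$, the posterior reduces to the prior tilted by $e^{-\lambda_2 D(\theta)}$, and it equals the prior exactly when $\lambda_2=0$, i.e., when the distortion constraint in $\mathcal P_2$ is slack. I will state this as the interpretation under which the statement's claim ``posterior equals prior'' holds. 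In that case the receiver relies only on $H(t)$ to form $\widehat\psi,\widehat\eta$.

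\emph{Sender optimality.} Fix the receiver belief. For a non-compliant type, compare the payoff of $m=00$ with any deviation $m'$. The lie gain $\Delta(\theta,00)$ from Eq.~\eqref{eq:lie_gain} is weakly the largest over $\mathcal M$, and it is strictly larger than the truthful gain $\Delta(\theta,\theta)=0$. The discount $e^{-aR_c}$ in Eq.~\eqref{eq_sender_payoff_modified} depends on the true $(\psi,\eta)$, not on the message, so it multiplies all options equally and does not change the ranking.

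The remaining term is the semantic-loss term $\lambda_1 U_L$. Under pooling, the on-path belief is the prior, so $\sum_j p_{ij}q_{ji}=q(\theta_i)$ whatever the type. A unilateral deviation to an off-path message earns at most the semantic accuracy given by the chosen off-path belief, but forfeits the lie gain. The hypothesis $U_s^{\text{lie}}>U_s^{\text{truth}}$ is exactly the statement that the net utility, including this loss term, of lying exceeds that of truth-telling. Combining it with the fact that $00$ dominates partial lies ($01$ or $10$) in incentive terms shows that $00$ is a strict best response. For type $00$, reporting $00$ is trivially optimal since $\Delta\equiv 0$ for that type.

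The main obstacle is the interaction between the loss term $U_L$ and the off-path beliefs. Because $U_L$ depends on the receiver's posterior, a deviation to truth could in principle gain semantic accuracy. I would handle this by choosing the off-path beliefs as above and interpreting the hypothesis as already netting out the $\lambda_1$ penalty. If needed, I would make explicit the sufficient condition
\[
e^{-aR_c}\,\Delta(\theta,00)>\lambda_1\bigl(\max_m q(\theta\mid m)-q(\theta)\bigr).
\]
The existence claim then follows from exhibiting this profile.
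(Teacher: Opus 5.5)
Your proposal is correct and follows essentially the same route as the paper's appendix argument. You exhibit the profile $g^\ast(00\mid\theta)=1$ for all $\theta$, note that Bayes' rule returns the prior on the pooled message, and check that deviating to truthful reporting is unprofitable because the discounted lie incentive outweighs the semantic-accuracy gain; the paper states this condition as $e^{-aR_c^{\text{pooling}}}(I_v+I_m)>\lambda_1 K^{-1}$, with the $1/K$ coming from the normalization of $U_L$. Your version is more careful than the paper's: fixing off-path beliefs on $\theta=m$ is what makes your sufficient condition rule out partial lies as well, and you correctly flag that the distortion-tilted posterior in Eq.~\eqref{eq:posterior_solution} equals the prior only when $\lambda_2=0$.
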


% In this equilibrium, reported behavioral data are uninformative, and epidemic control relies entirely on the PHA’s inference from hospitalization dynamics rather than from self-reported signals.

\paragraph{Separating equilibrium.}
A separating equilibrium \cite{BergstromRS2002} is achieved when all non-compliant individuals are honest. In this case, the reported messages perfectly reveal individuals’ true behaviors, enabling the PHA to accurately infer risk and select interventions such that $R_c$ falls below one, with the induced policy intensity $\widehat{\psi}$ exceeding the herd-immunity threshold $\psi_{HI}$.

\begin{theorem}
\label{thm:separating}
If all responsive individuals report truthfully, i.e., $m_k = \theta_k$ for all $k\in\mathcal{K}_c$, then the signaling game admits a separating equilibrium. In this equilibrium, the PHA recovers unbiased estimates of behavioral parameters $(\psi,\eta)$ and can select policies such that
\[
R_c(\psi,\eta) < 1 \quad \text{and} \quad \widehat{\psi} > \psi_{HI},
\]
ensuring convergence to a disease-free equilibrium.
\end{theorem}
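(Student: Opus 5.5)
We prove Theorem~\ref{thm:separating} by exhibiting the truthful profile together with Bayes-consistent beliefs, checking that neither player gains by deviating, and then turning the resulting unbiased estimates into the epidemic-control statement through Eq.~\eqref{eq_Rc} and Eq.~\eqref{eq:psi_HI}.

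\textbf{Step 1: beliefs under the truthful profile.} Fix the sender strategy $g(m\mid\theta)=\mathbf{1}(m=\theta)$ for all $k\in\mathcal{K}_c$. Every message in $\mathcal{M}$ that is sent at all is sent by exactly one type. Substituting $P_S(m\mid\theta)=\mathbf{1}(m=\theta)$ into Eq.~\eqref{eq:rsa_listener} gives the point-mass posterior $P_L(\theta\mid m)=\mathbf{1}(\theta=m)$. The same holds for Eq.~\eqref{eq:posterior_solution}: the factor $\exp(-\lambda_2 D(\theta))$ cancels between numerator and denominator, because only one $\theta'$ survives in the denominator sum.

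This posterior has zero entropy, so the objective of $\mathcal{P}_2$ attains its minimum. Feasibility of the distortion constraint then reduces to $D(m)\le D^*$. This holds because with correct types the inferred $\widehat{\psi},\widehat{\eta}$ match the truth, and the residual $\abs{H-\widehat H}^2$ comes only from process noise. We take $D^*$ to be at least this noise floor, and we state this explicitly as the standing assumption.

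\textbf{Step 2: sender optimality.} Given $q(\theta\mid m)=\mathbf{1}(\theta=m)$, the semantic term is $-\lambda_1U_L=\frac{\lambda_1}{K}\sum_i\sum_j p_{ij}\mathbf{1}(j=i)$. This is maximized, with value $\lambda_1|\mathcal{K}_c|/K$, exactly by truthful reporting. A unilateral lie by a non-compliant type $\theta$ to a message $m\neq\theta$ gains at most $e^{-aR_c}\Delta(\theta,m)\le e^{-aR_c}(I_v+I_m)$, but forfeits the accuracy reward $\lambda_1/K$ per reporter.

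So truthful reporting is a best response whenever $\lambda_1/K \ge e^{-aR_c}(I_v+I_m)$. This is the regime implicitly assumed by the hypothesis ``all responsive individuals report truthfully.'' We will read the hypothesis as asserting that the sender's $\mathcal{P}_1$ solution is truthful, equivalently that this inequality holds. Off-path messages do not arise, since all four messages can be sent.

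\textbf{Step 3: unbiasedness and control.} With exact types, $\widehat\psi$ and $\widehat\eta$ are empirical frequencies of $\theta^v=0$ and $\theta^m=0$ among $\mathcal{K}_c$. Under random (ignorable) nonresponse they are unbiased for $\psi$ and $\eta$. The PHA then chooses corrections $\widetilde\psi,\widetilde\eta$ in Eq.~\eqref{eq:combined_params} so that $\psi_r>\psi_{HI}$.

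Because $R_c$ in Eq.~\eqref{eq_Rc} is decreasing in $\psi$ whenever $R_0(1-\delta)<1$, the inequality $\psi>\psi_{HI}$ is equivalent to $R_c<1$; Eq.~\eqref{eq:psi_HI} is derived precisely from this equivalence. Standard SVEAIR theory (as in the cited \cite{ChoiJTB2020}) then gives local asymptotic stability of the disease-free equilibrium when $R_c<1$.

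\textbf{Main obstacle.} The delicate step is Step 2. The paper's sender objective does not by itself force truthfulness, since lying has positive direct payoff. The proof therefore hinges on making the incentive-compatibility condition $\lambda_1/K \ge e^{-aR_c}(I_v+I_m)$ explicit, possibly using that $R_c$ rises under lying to strengthen the discount, and on phrasing the theorem's hypothesis as that condition.
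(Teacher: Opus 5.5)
The paper gives no proof of this theorem. Its equilibrium appendix covers only pooling and partial pooling, and the only justification is the sentence before the statement: truthful messages reveal types, so the PHA infers $(\psi,\eta)$ and picks $\widehat\psi>\psi_{HI}$. Your Steps~1 and~3 follow that skeleton. Your Step~2 adds what the paper never checks, namely that truth-telling is a best response.

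You are right that Step~2 cannot be skipped. Read literally as a hypothesis about behaviour, the statement fails for small $\lambda_1$, because a type $11$ then strictly prefers sending $00$.

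Your accounting also contradicts the paper's appendix, which claims pure separation is never an equilibrium when $I_m>0$. That computation credits the deviating $\theta_{01}$ with accuracy $\lambda_1\cdot 1$, the value of $\alpha q+(1-\alpha)$ at $\alpha=0$. Under separating beliefs, however, a report $00$ is decoded as $\theta_{00}$, so the deviator's accuracy is $q(\theta_{01}\mid 00)=0$. Your version is the consistent one, and it agrees with the main-text remark that large $\lambda_1$ gives $\alpha=0$.

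Step~2 needs two refinements:
\begin{itemize}
\item The incentive condition must hold at the equilibrium value $R_c<1$ produced in Step~3. There $e^{-aR_c}>e^{-a}$, so the controlled regime is exactly where the discount deters lying least. In the continuum reading an individual deviation does not move $R_c$, so ``$R_c$ rises under lying'' cannot help.
\item With the literal normalisation ($\sum_k$ in $U_s$, $\frac{1}{K}\sum$ in $U_L$), your condition $\lambda_1/K\ge e^{-aR_c}(I_v+I_m)$ fails at the paper's baseline values unless $R_c$ exceeds about $22$. State whether you adopt the per-capita normalisation that the appendix implicitly uses.
\end{itemize}

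Step~3 has two errors to fix.

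First, $\partial R_c/\partial\psi=R_0\mu(\eta-\delta)/(\psi+\mu)^2$. So $R_c$ is decreasing in $\psi$ only when $\eta<\delta$, not ``whenever $R_0(1-\delta)<1$''. The equivalence you need still holds by direct rearrangement: $R_c<1$ iff $\psi\,(1-R_0(1-\delta))>\mu\,(R_0(1-\eta)-1)$, and dividing by $1-R_0(1-\delta)>0$ gives $\psi>\psi_{HI}$.

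Second, local asymptotic stability does not give the claimed convergence from the actual initial state. A global argument is easy here. From $N'=\Lambda-\mu N$, $S'\le\Lambda-(\psi+\mu)S$ and $V'\le\psi S-\mu V$ you get $\limsup S\le\Lambda/(\psi+\mu)$ and $\limsup V\le\psi\Lambda/(\mu(\psi+\mu))$. Hence $(E,A,I)$ is eventually dominated by the cooperative linearisation at the disease-free state with $\varepsilon$-inflated coefficients. Its spectral bound is negative when $R_c<1$, so $(E,A,I)\to 0$.

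Finally, you treat $\widetilde\psi,\widetilde\eta$ as freely chosen. That is fine for the existential ``can select'', provided $\psi_{HI}$ lies below the policy cap. But under truthful reports $D(t)$ sits at the noise floor, so the paper's actual gradient rule makes essentially no correction. It would not by itself push $\psi_r$ above $\psi_{HI}$.
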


% Separating equilibria represent an idealized information regime in which signaling inefficiency does not distort epidemic control.

\paragraph{Partial pooling equilibrium.}
Departing from the extreme pooling equilibrium and the idealized separating equilibrium, a partial pooling equilibrium arises when some individuals report truthfully while others engage in deception. This is captured by the mixing factor $\alpha$. In this case, although self-reported signals are imperfect, the PHA may still extract partial information and attempt to enforce policies satisfying $R_c(\psi,\eta) < 1$ and $\widehat{\psi} > \psi_{HI}$.

\begin{theorem}
\label{thm:partial-pooling}
Consider the signaling game $\mathcal{G}$ with sender utility weight $\lambda_1$ on semantic accuracy, masking incentive $I_m$, economic factor $a$, and population type distribution $(\pi_{00}, \pi_{01}, \pi_{10}, \pi_{11})$. Define the controlled reproduction numbers $R_c(0)$ under full separation and $R_c(1)$ under full pooling, with sensitivity $\beta = \frac{dR_c}{d\alpha}\big|_{\alpha=0} > 0$.

\begin{enumerate}
    \item \textbf{Existence.} A partial pooling equilibrium with mixing probability $\alpha^* \in (0,1)$ exists if and only if the semantic weight satisfies:
    \begin{equation}
        e^{-aR_c(1)} I_m < \lambda_1 < e^{-aR_c(0)} I_m + a\beta e^{-aR_c(0)} I_m.
        \label{eq:existence-condition}
    \end{equation}

    \item \textbf{Characterization.} When the existence condition holds, the equilibrium mixing probability $\alpha^*$ satisfies the fixed-point equation:
    \begin{equation}
        \alpha^* = \frac{e^{-aR_c(\alpha^*)} I_m}{\lambda_1 \left(1 - q(\theta_{01} \mid m = 00)\right)},
        \label{eq:fixed-point}
    \end{equation}
    where $q(\theta_{01} \mid m = 00) = \frac{\alpha^* \pi_{01}}{\pi_{00} + \alpha^*(\pi_{01} + \pi_{10}) + \pi_{11}}$ is the posterior belief that a pooling message originated from type $\theta_{01}$.
    
    \item \textbf{Approximation.} For small deviations from separation (small $\alpha$), the equilibrium mixing probability admits the closed-form approximation:
    \begin{equation}
        \alpha^* \approx \frac{e^{-aR_c(0)} I_m}{\lambda_1 + a\beta e^{-aR_c(0)} I_m}.
        \label{eq:alpha-approx}
    \end{equation}
\end{enumerate}
\end{theorem}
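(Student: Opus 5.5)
Throughout, I restrict attention to the masking dimension, since Theorem~\ref{thm:partial-pooling} is parameterized only by $I_m$. A type $\theta_{01}$ (vaccinated, not masking) truthfully sends $01$ with probability $1-\alpha$ and pools on $00$ with probability $\alpha$. The vaccination dimension is held at its separating or pooling behavior, as in Theorems~\ref{thm:separating} and~\ref{thm:pooling}.

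The plan is to write the representative sender's objective $U_s-\lambda_1U_L$ as a scalar function $F(\alpha)$ and to read off an interior best response from its indifference condition. The receiver side is handled first. Given $\alpha$, Bayes' rule \eqref{eq:rsa_listener} (equivalently \eqref{eq:posterior_solution} with the distortion term absorbed into the prior) gives $q(\theta_{01}\mid m=00)$ in the form stated in item~2. This posterior is the receiver's best response and is consistent with the sender's strategy. It therefore suffices to find $\alpha$ that is a best response for $\theta_{01}$ given this posterior.

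\textbf{Step 1 (marginal comparison).} The gain from lying is $\alpha\,e^{-aR_c(\alpha)}I_m$, using \eqref{eq:lie_gain} and the discount \eqref{eq_sender_payoff_modified}. The semantic-accuracy term is computed next. Truthful reports $01$ are decoded correctly with posterior $1$. Pooled reports $00$ are decoded as $\theta_{01}$ only with weight $q(\theta_{01}\mid 00)$. The accuracy loss attributable to mixing is therefore $\lambda_1\alpha\bigl(1-q(\theta_{01}\mid 00)\bigr)$ per unit mass, up to the population normalization. Equating benefit and cost in the form benefit/cost $=\alpha$ is how \eqref{eq:fixed-point} should come out. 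I expect to have to state this as the paper's modeling convention: the lie benefit enters linearly, while the semantic loss is treated as quadratic (a penalty $\tfrac{\lambda_1}{2}\alpha^2(1-q)$), so that the first-order condition reads $e^{-aR_c(\alpha)}I_m=\lambda_1\alpha(1-q)$. Freezing the posterior and $R_c$ at the receiver's equilibrium response (the Nash/fixed-point convention) then gives exactly \eqref{eq:fixed-point}. Item~2 is this first-order condition together with consistency.

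\textbf{Step 2 (existence).} Define $h(\alpha)=e^{-aR_c(\alpha)}I_m-\lambda_1\alpha(1-q(\alpha))$, or the equivalent marginal-gain function $\Phi(\alpha)$ whose sign gives the direction of profitable deviation. An interior equilibrium exists iff $\Phi$ changes sign on $(0,1)$. Continuity of $R_c$ in $\alpha$ (the estimate $\widehat\eta$ is affine in $\alpha$ and \eqref{eq_Rc} is smooth) and of $q$ then give existence by the intermediate value theorem. The endpoint conditions are what produce \eqref{eq:existence-condition}.

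At $\alpha=1$ (full pooling), deviating toward truth must be profitable. This requires the marginal accuracy gain $\lambda_1$ to exceed the discounted incentive $e^{-aR_c(1)}I_m$, giving the left inequality.

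At $\alpha=0$, deviating toward lying must be profitable. Linearizing $e^{-aR_c(\alpha)}\approx e^{-aR_c(0)}(1-a\beta\alpha)$ gives a marginal condition $\lambda_1<e^{-aR_c(0)}I_m(1+a\beta)$, which is the right inequality.

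For the ``only if'' direction, I argue that if either endpoint inequality fails, the corresponding corner is a best response. By monotonicity of $R_c$ in $\alpha$ (since $\beta>0$ and $\eta$ enters \eqref{eq_Rc} decreasingly), $\Phi$ then has a constant sign, so no interior root exists. This monotonicity argument is the main obstacle, because $q(\alpha)$ also varies with $\alpha$. I would try to show that $h$ is strictly decreasing: $e^{-aR_c}$ decreases, while $\alpha(1-q)$ increases because $q\le\pi_{01}/(\pi_{00}+\pi_{11})$ grows sublinearly. If full monotonicity fails, I would fall back to proving sufficiency plus uniqueness near $\alpha=0$.

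\textbf{Step 3 (approximation).} For small $\alpha$, $q(\theta_{01}\mid 00)=O(\alpha)$, so $1-q\approx1$, and $e^{-aR_c(\alpha)}\approx e^{-aR_c(0)}(1-a\beta\alpha)$. Substituting into \eqref{eq:fixed-point} gives $\lambda_1\alpha\approx e^{-aR_c(0)}I_m-a\beta\alpha e^{-aR_c(0)}I_m$. Solving this linear equation yields \eqref{eq:alpha-approx}, with the error being $O(\alpha^2)$.
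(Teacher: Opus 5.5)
\textbf{The gap is in Step~2.} Your endpoint analysis does not yield \eqref{eq:existence-condition}, and the biconditional cannot be proved for your equation.

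Take $h(\alpha)=e^{-aR_c(\alpha)}I_m-\lambda_1\alpha(1-q(\alpha))$. At $\alpha=0$ both the semantic cost $\lambda_1\alpha(1-q)$ and the correction $a\beta\alpha$ vanish, so $h(0)=e^{-aR_c(0)}I_m>0$ for every $\lambda_1$. The paper likewise concludes that pure separation is never an equilibrium when $I_m>0$. This endpoint therefore gives no condition on $\lambda_1$ at all, and in particular not $\lambda_1<e^{-aR_c(0)}I_m(1+a\beta)$.

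At $\alpha=1$ you have $h(1)=e^{-aR_c(1)}I_m-\lambda_1(1-\pi_{01})$. So $\lambda_1>e^{-aR_c(1)}I_m$ is necessary but not sufficient for $h(1)<0$.

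Hence the intermediate value theorem alone gives a root $\alpha^*\in(0,1)$ of \eqref{eq:fixed-point} for every $\lambda_1>e^{-aR_c(1)}I_m/(1-\pi_{01})$. This includes arbitrarily large $\lambda_1$ that violate the right inequality of \eqref{eq:existence-condition}. The root then sits near $e^{-aR_c(0)}I_m/\lambda_1$, as \eqref{eq:alpha-approx} itself predicts. So for the equilibrium condition $h(\alpha^*)=0$ that you and the paper share, the ``only if'' direction of item~1 is false.

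Your monotonicity step does work when $R_c(\alpha)$ is nondecreasing on $[0,1]$ and $\pi_{00}+\pi_{11}>0$, since then $\frac{d}{d\alpha}[\alpha(1-q)]>\pi_{10}/(\pi_{01}+\pi_{10})\ge 0$. But it proves a different criterion: a unique interior root exists iff $\lambda_1(1-\pi_{01})>e^{-aR_c(1)}I_m$. Monotonicity of $R_c$ on all of $[0,1]$ is itself an assumption beyond $\beta>0$. It does not follow from $\eta$ entering \eqref{eq_Rc} decreasingly, because $\widehat\eta(\alpha)$ increases with $\alpha$ and so lowers $R_c(\widehat\psi,\widehat\eta)$.

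The paper's appendix reaches the stated window in two steps. First, it weakens the $\alpha=1$ condition using $1-\pi_{01}\le 1$. Second, it imposes $\alpha^*<1$ in \eqref{eq:alpha-approx}, which gives the lower bound $\lambda_1>e^{-aR_c(0)}I_m(1-a\beta)$; this is then written as an upper bound with the sign of $a\beta$ flipped.

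\textbf{On Step~1.} Your $h$ is exactly the function whose zero the paper takes as the equilibrium. The paper gets it from an indifference comparison rather than a first-order condition: it equates $U^{01}_{\mathrm{pool}}(\alpha)=e^{-aR_c(\alpha)}I_m+\lambda_1[\alpha q+(1-\alpha)]$ with $U^{01}_{\mathrm{sep}}=\lambda_1$. In doing so it leaves the lie gain unscaled by $\alpha$ while using the mixed-strategy accuracy.

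The quadratic penalty $\tfrac{\lambda_1}{2}\alpha^2(1-q)$ is therefore your own modification of $\mathcal{P}_1$, not a convention of the paper. The literal objective is linear in $g$ for fixed $q$ and $R_c$. As you noticed, $\alpha$ then cancels, leaving the indifference equation $e^{-aR_c(\alpha^*)}I_m=\lambda_1(1-q)$.

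In that literal version a genuine upper bound on $\lambda_1$ does appear. A $\theta_{01}$ deviator to $00$ at $\alpha=0$ is decoded with $q(\theta_{01}\mid 00)=0$, so it gains iff $\lambda_1<e^{-aR_c(0)}I_m$. That bound, however, belongs to the literal indifference equation, not to \eqref{eq:fixed-point}.

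\textbf{Smaller points.} The posterior in item~2 presupposes the paper's profile, in which $\theta_{10}$ mixes with the same $\alpha$ and $\theta_{11}$ always sends $00$. Holding the vaccination dimension at a separating or pooling behavior would not produce the denominator $\pi_{00}+\alpha(\pi_{01}+\pi_{10})+\pi_{11}$. Your Step~3 coincides with the paper's linearization.
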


\begin{proof}
See Appendix~\ref{supp:equilibrium-pooling}.
\end{proof}
The bound in  \eqref{eq:existence-condition} reveals that partial pooling emerges only when the semantic accuracy weight $\lambda_1$ lies in an intermediate range: for small $\lambda_1$ incentives dominate and all non-compliant individuals pool ($\alpha=1$); for large $\lambda_1$ accuracy dominates and all individuals separate ($\alpha=0$). %If $\lambda_1$ falls below the lower bound, incentives dominate and all non-compliant individuals pool ($\alpha = 1$). If $\lambda_1$ exceeds the upper bound, accuracy concerns dominate and all individuals separate ($\alpha = 0$). 

We decompose the population $\mathcal{K}$ into $\mathcal{K}_c$, the set of individuals responding to PHA queries, and $\mathcal{K}_s$, the set of silent individuals. The aggregate deception level $\tilde m\in[0,1]$ is defined as:
\begin{equation}
\begin{aligned}
    \tilde{m} &= \frac{1}{|\mathcal K|}
\left[\frac{1}{2}\sum\limits_{k\in \mathcal{K}_c}\sum\limits_{m_k} g(m_k\mid \theta_k) \lVert m_k-m_k^{(\ast)} \rVert_H + \frac{1}{2}\sum\limits_{k\in \mathcal{K}_s} 2\right] \\ 
              &= \frac{1}{|\mathcal K|}
\left[\frac{1}{2}\sum\limits_{k\in \mathcal{K}_c}\sum\limits_{m_k} g(m_k\mid \theta_k) \lVert m_k-m_k^{(\ast)} \rVert_H + \abs{\mathcal{K}_s}\right],
\end{aligned}
\end{equation}

where $\lVert m_k - m_k^{(\ast)} \rVert_H$ denotes the Hamming distance between the reported message and the ground-truth behavior.
Using gradient descent, the PHA updates the policy correction term $\tilde{\psi}$ as:
\begin{equation}
\tilde{\psi}_t = \tilde{\psi}_{t-1} - \eta_{step} \nabla_{\tilde{\psi}} D(t),
\end{equation}

where $\eta_{\mathrm{step}}$ is the gradient step size. The same update rule applies to $\widetilde{\eta}$, and the resulting intervention recommendations are constrained to valid ranges.

% \subsubsection*{partial pooling equilibrium.}

% In this case, the deception rate is such that $R_c(\psi,\eta) > 1$. Thus, $\widehat{\psi} < \psi_{HI}$ and the population reaches an endemic equilibrium.  This means that at equilibrium, the infected people reach a constant value regardless of the PHA strategies. By substituting for $I^{\prime}=0, A^{\prime}=0$ and $R^{\prime}=0$ in \eqref{eq_ODE}, the endemic equilibrium can be calculated as 
% \begin{align}
%         E^{\ast} &= \left(V^{\ast},E^{\ast},I^{\ast},A^{\ast},R^{\ast}\right) \nonumber \\
%     &=\left(V^{\ast},\frac{\gamma+\mu}{kp}I^{\ast},I^{\ast},\frac{1-p}{p}I^{\ast},\frac{\gamma}{p\mu}I^{\ast}\right).
% \end{align}

% In the presence of an endemic equilibrium, an individual chooses deception when relative utility of deception is greater than that of being honest. This can be expressed as being deceptive when
% \begin{align}
% U_s^{(k)}(c_k,m_k) > \lambda U_L.
% \label{eq_endemic_eq_condition1}
% \end{align}
% Responding to this scenario, the PHA can choose correction factors $\tilde{\psi}$ and  $\tilde{\eta}$ such that at convergence ( i.e., at   $t=\infty$),  $D(\infty) = \sum\limits_{\boldsymbol{c}} p(\boldsymbol{c} \mid \boldsymbol{m}(\infty)) 
% \sum\limits_{\boldsymbol{\theta}(\infty)} p(\boldsymbol{\theta}(\infty) \mid \boldsymbol{c}) 
%     \left| \xi  I^{\ast} - \widehat{H}(\infty) \right|^2 $ ,   where $\widehat{H}(\infty)$ corresponds  to the   estimated hospitalization rate  under  the maximum  tolerable deception.

\section{Experimental Results}

\paragraph{Experimental settings.}
We simulate $n=100$ Monte Carlo runs of a K=10{,}000-person SVEAIR population over 26 weeks (half-year) following the two stage loop of individuals reporting and PHA modeling as shown in Figure~\ref{fig:System_Model}.
Experiments span a factorial grid of equilibrium types, policy adaptivity, and experimental settings, evaluated on epidemiological, behavioral, and strategic metrics.
Details about simulation rules and evaluation procedures are provided in Appendix~\ref{supp:settings}. %We now summarize the key results of our work.

\paragraph{Interactive feedback between population and PHA drives effective epidemic control.}

\begin{figure}[h]
    \centering
    \includegraphics[width=\columnwidth]{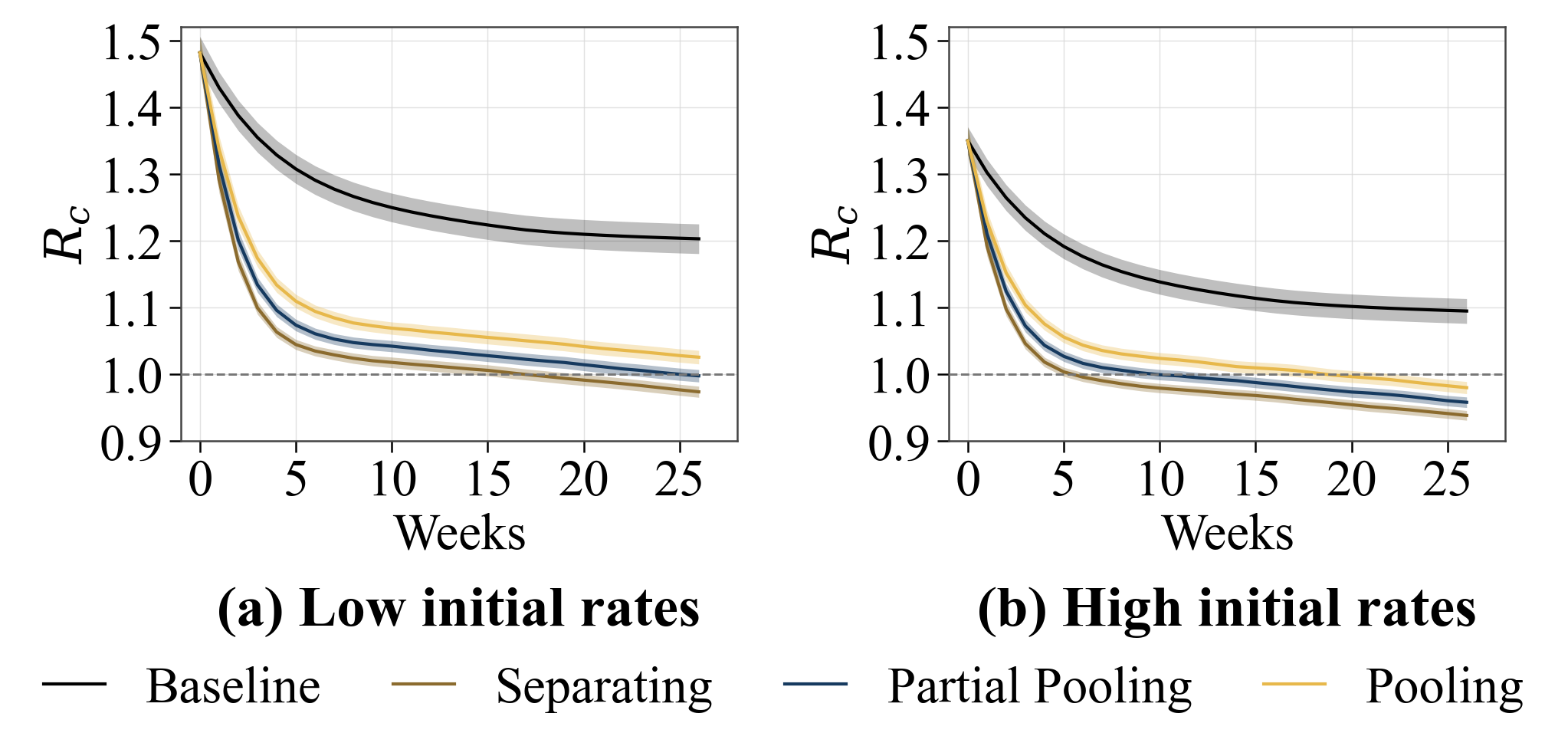}
    \caption{Controlled reproduction number ($R_c$) across equilibria vs no-interaction baseline with low (a) and high (b) initial behavioral rates.}
    \label{fig:rc} 
    \vspace{-2mm}
\end{figure}

To reflect the real-world settings, we consider two regimes: (1) low baseline behavioral rates ($\psi=0.05$, $\eta=0.10$), representing adverse conditions in which protective behaviors are largely neglected, and (2) high baseline behavioral rates ($\psi=0.15$, $\eta=0.15$), corresponding to more favorable conditions.
Figure~\ref{fig:rc} shows the evolution of $R_c$ across equilibria compared against a no-interaction baseline. 
Under low baseline rates (Figure~\ref{fig:rc}a), the separating and partial pooling equilibria achieve epidemic control ($R_c <1$) at week~17 and~24. Pooling fails to reach the epidemic threshold within the simulation horizon, indicating sustained transmission due to severely degraded information quality.
When initial behaviors are better (Figure~\ref{fig:rc}b), all three equilibria achieve control with different speeds: separating (week~5), partial pooling (week~9), pooling (week~18). 
In both scenarios, the no-interaction baseline remains $R_c > 1$, demonstrating that adaptive, behavior-informed feedback between the population and the PHA is essential for maintaining effective epidemic control.

\paragraph{Even imperfect behavioral reports provide informative signals for the PHA.}

\begin{figure}[h]
    \centering
    \includegraphics[width=\linewidth]{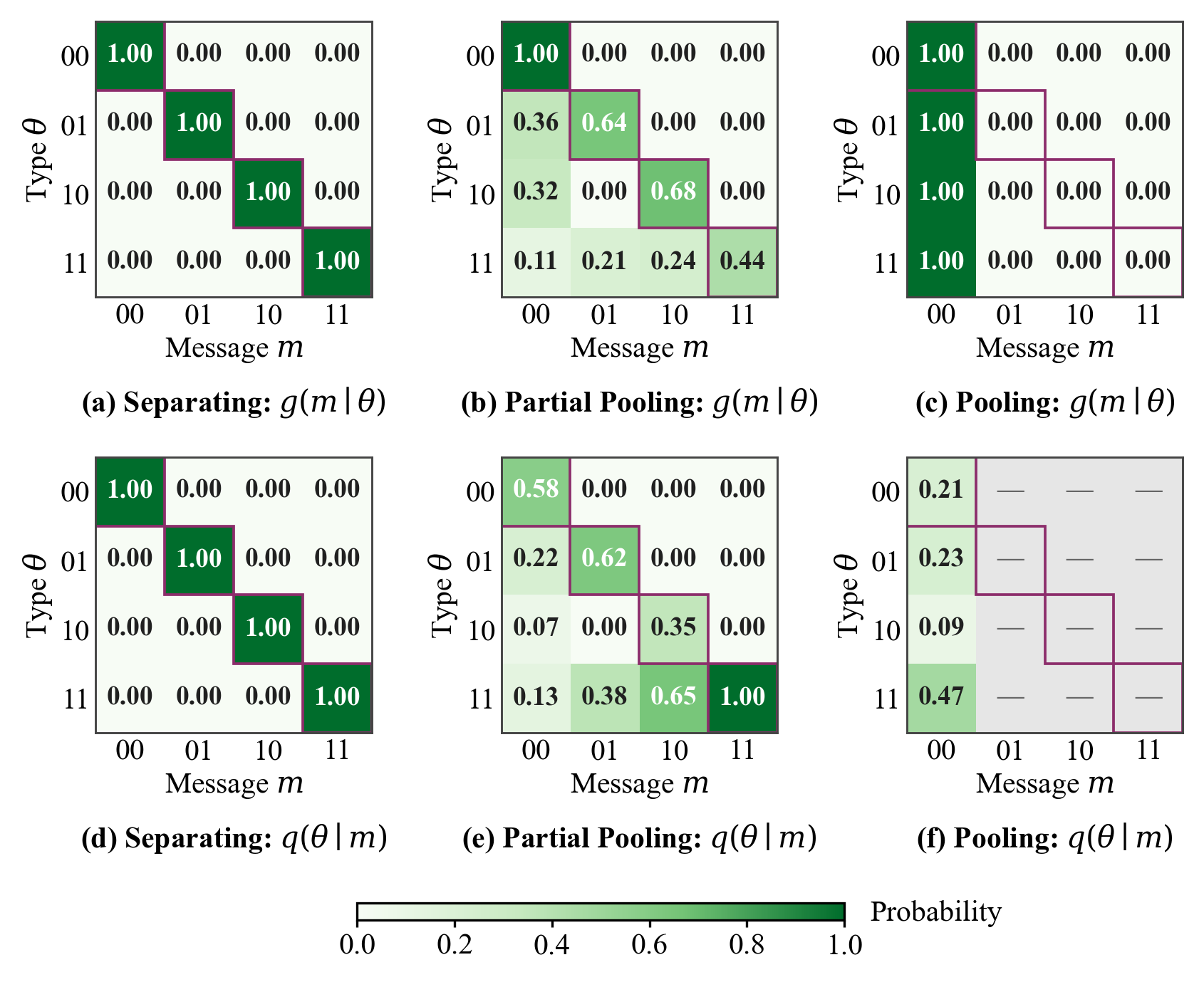}
    \caption{Sender strategies $g(m \mid \theta)$ and receiver beliefs $q(\theta \mid m)$ across equilibria.}
    \label{fig:strategy-belief}
    \vspace{-2mm}
\end{figure}

Figure~\ref{fig:strategy-belief} shows the sender communication strategy $g(m\mid\theta)$ and receiver belief $q(\theta\mid m)$ at the end of the simulation. In the separating equilibrium (Figure~\ref{fig:strategy-belief}a,d), each behavioral type sends a distinct message, allowing the receiver to recover the underlying type distribution with posterior probability one. In contrast, in the pooling equilibrium (Figure~\ref{fig:strategy-belief}c,f), all types send the same message, causing the receiver belief to collapse to the prior and making the behavioral reports uninformative.
The partial pooling equilibrium (Figure~\ref{fig:strategy-belief}b,e) represents an intermediate information regime. Although strategic misreporting introduces ambiguity between behavioral types, the received signals remain partially informative, enabling the receiver to recover non-trivial posterior beliefs. These results show that even imperfect behavioral reports can provide actionable information for the PHA, while the equilibrium structure determines the extent to which such information can be extracted.

\paragraph{Signaling game equilibria govern persistent deception patterns.}

\begin{figure}[h]
    \centering
    \includegraphics[width=\linewidth]{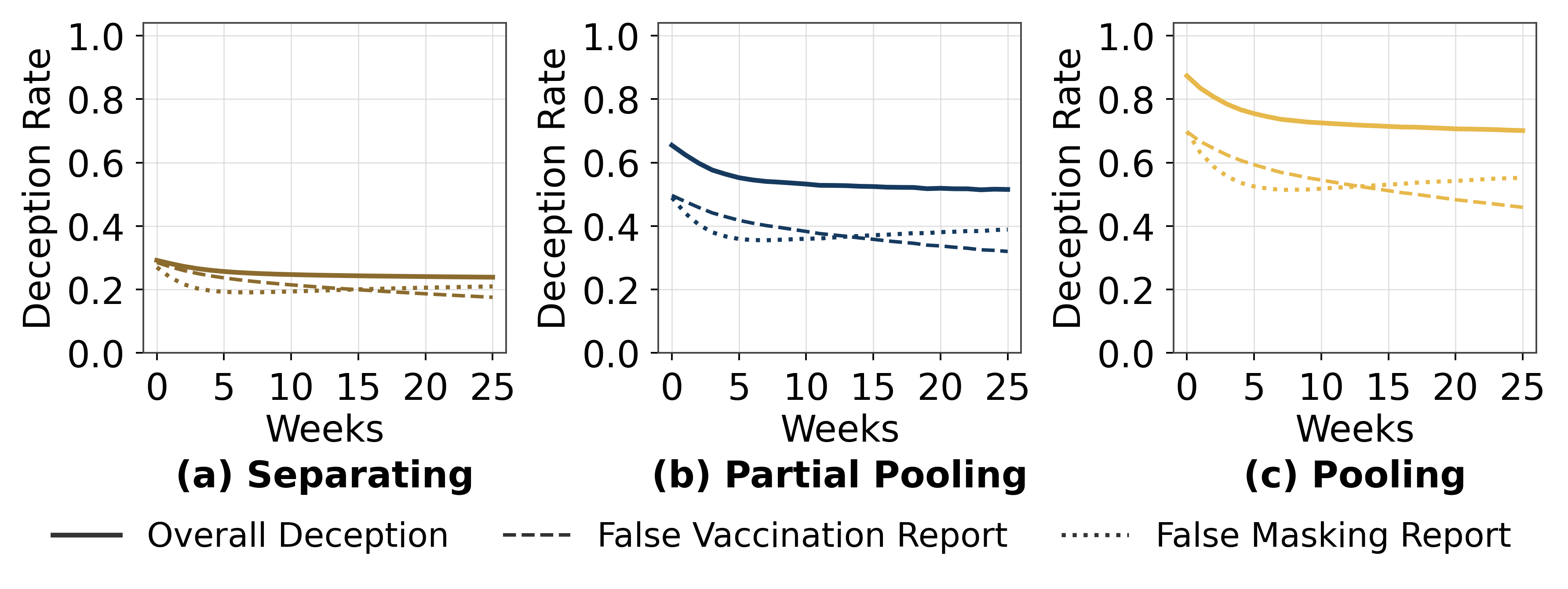}
    \caption{The deception rates across equilibria under low initial behavioral rates.}
    \label{fig:deception_low}
    \vspace{-2mm}
\end{figure}

The overall deception rate and separate deception rates for vaccination and masking are shown in Figure~\ref{fig:deception_low} and Figure~\ref{fig:deception_high}. Under low baseline rates (Figure~\ref{fig:deception_low}), pooling exhibits the highest deception: the overall deception rate starts above 0.8 and gradually declines toward about 0.65 by week 26. Partial pooling has intermediate deception, decreasing from 0.6 to 0.5, while separating maintains the lowest and most stable deception around 0.3 (reflection of non-responsive share). 
%Within each equilibrium, masking deception (dotted curves) is consistently higher than vaccination deception (dashed curves), indicating that individuals are more easily misreport masking.
The same ordering persists when baseline rates are high (Figure~\ref{fig:deception_high}). 
%Partial pooling shows a pronounced decline in masking deception, falling from above 0.5 to near 0.35 by week 26. Separating remains 0.3. 
These results highlight that equilibrium structure dominates the level of dishonesty, and that higher initial behavioral rates modestly reduces deceptive reporting.

\begin{figure*}[t]
    \centering
    \includegraphics[width=\textwidth, height=6cm]{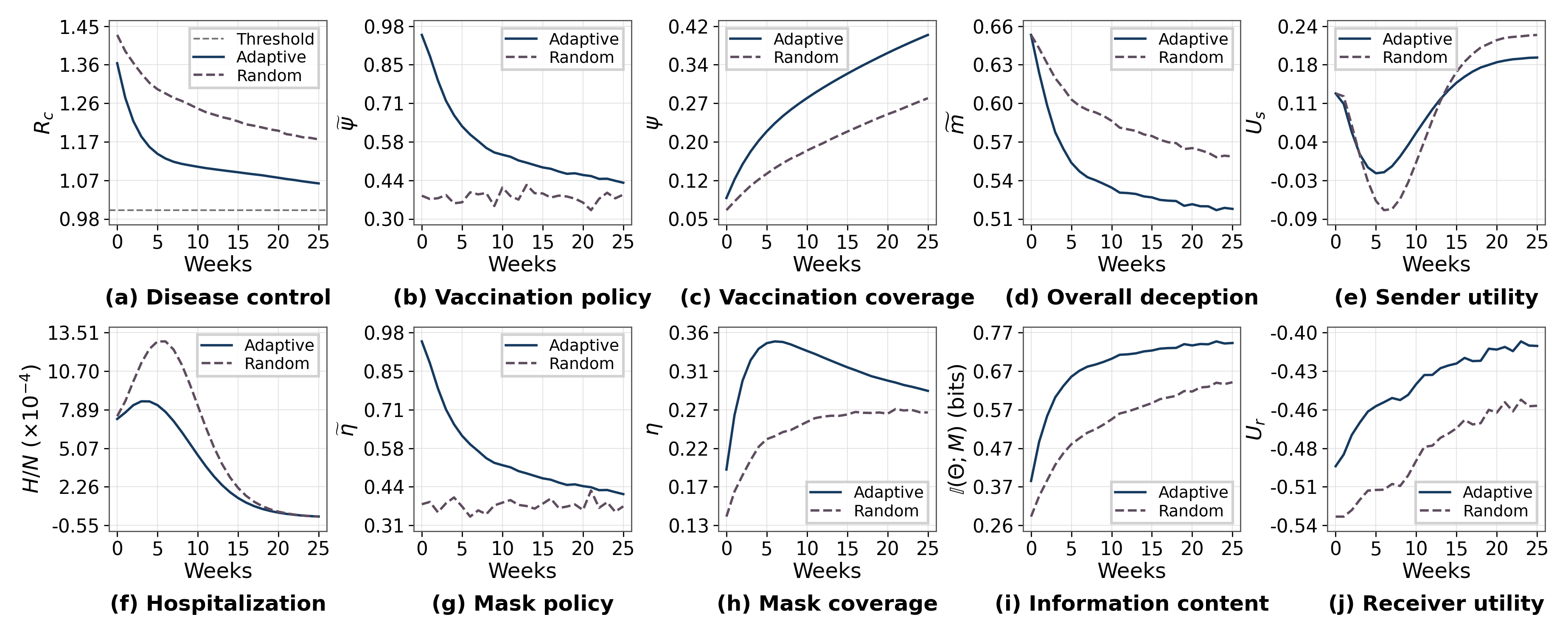}
    \caption{Comparison of adaptive and random policies under partial pooling with low initial behavioral rates, across epidemic dynamics, policy actions, signaling behavior, and utilities.}
    \label{fig:gov_policy_partial_low} %\vspace{-4mm}
\end{figure*}

\paragraph{Adaptive, signal-informed policy supports epidemic mitigation.}

To quantify the value of behavioral information, we benchmark our adaptive, signal-informed PHA response against a random policy that ignores the information of reports under partial pooling with low initial behavioral rates. 
Under the adaptive policy, $R_c$ falls below one at week 22, associated with lower peak hospitalization (0.0037 vs 0.0058) (Figure~\ref{fig:gov_policy_partial_low}a,~\ref{fig:gov_policy_partial_low}f); whereas under the random policy disease not controlled.
Adaptive feedback drives vaccination and masking intensity near their caps between weeks 3–10 (Figure~\ref{fig:gov_policy_partial_low}b,~\ref{fig:gov_policy_partial_low}g), producing rapid gains in coverage to roughly 60\% by week 10, compared to below 20\% under random control (Figure~\ref{fig:gov_policy_partial_low}c,~\ref{fig:gov_policy_partial_low}h); despite higher early deception, the overall deception rate plateaus at $0.4$ vs a stable $\sim0.5$ under random policy (Figure~\ref{fig:gov_policy}d), and more rapid growth of information content bits grow much faster, indicating more informative signals received by the PHA (Figure~\ref{fig:gov_policy_partial_low}i). Sender utility rises more under adaptive control (from 0.24 to 0.34) than under random (0.31), and receiver utility is markedly higher, peaking near 6 versus staying below 1.5 (Figure~\ref{fig:gov_policy_partial_low}e,~\ref{fig:gov_policy_partial_low}j). Similar patterns across equilibria and initial compliance are shown in Appendix~\ref{supp: policy_comparison}, where the overall quality of control follows the expected ordering separating $>$ partial pooling $>$ pooling.

\paragraph{Partial pooling quantifies the tolerance frontier between behavioral deception and epidemic control.}

\begin{figure}[t]
\centering
\includegraphics[width=\columnwidth]{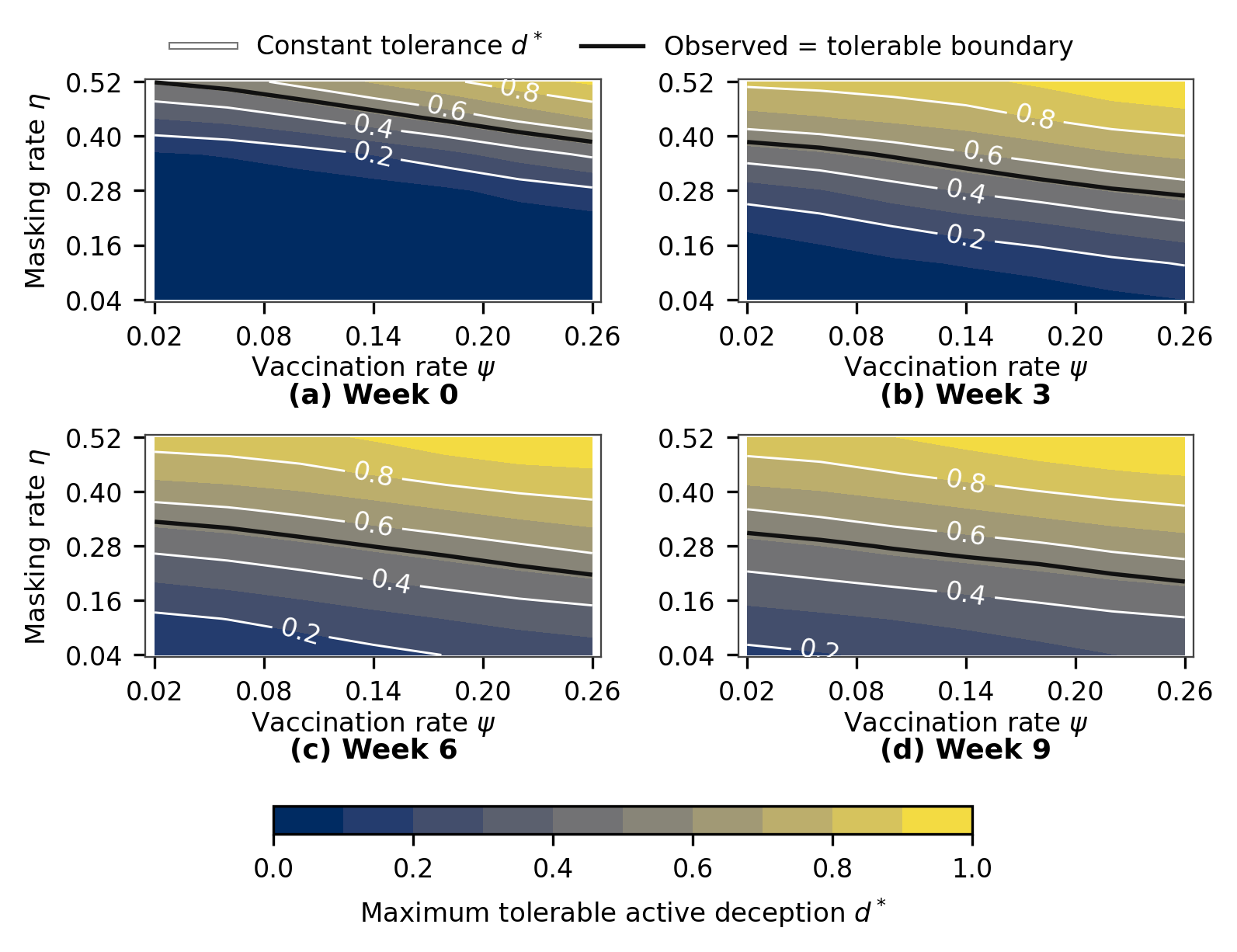} 
\caption{Maximum tolerable active deception under partial pooling across epidemic stages.}
\label{fig:tolerance_contours}% \vspace{-4mm}
\end{figure}

Figure~\ref{fig:tolerance_contours} characterizes the maximum tolerable active-deception level $d^*$ under partial pooling across combinations of vaccination rate $\psi$ and masking rate $\eta$. At Week~0, tolerance is limited when both behavioral rates are low but increases substantially as vaccination and masking coverage rises. The black curve marks combinations at which observed deception equals the tolerable level; regions above this boundary provide a positive tolerance margin. Across Weeks~3, 6, and 9, the tolerable region expands and the equality boundary shifts toward lower behavioral rates. Thus, under the modeled trajectory, the system becomes more robust to deception as behavioral protection accumulates and epidemic pressure is increasingly controlled. This pattern should not be interpreted as evidence that information quality becomes unimportant: low-coverage combinations continue to provide the smallest tolerance margins, particularly early in the outbreak. The contours also indicate complementarity between vaccination and masking. Higher adoption of either intervention can partly compensate for lower adoption of the other, although combinations with high values of both provide the greatest robustness to strategic misreporting. Overall, the figure quantifies how the behavioral protection profile and epidemic stage jointly determine the level of active deception that can be accommodated under the specified control criterion.

\paragraph{Real-world evidence supports strategic behavioral distortion modeling}

\begin{figure}[t]
\centering
\includegraphics[width=\columnwidth]{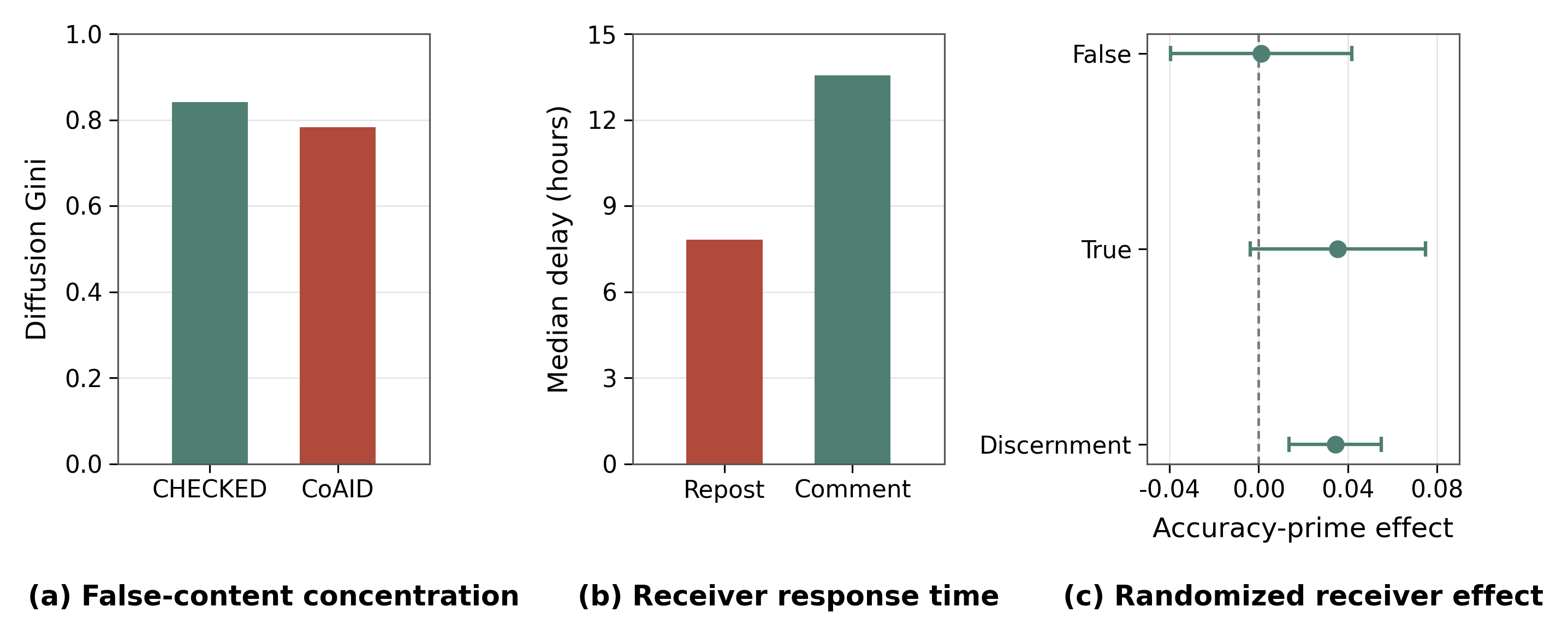} 
\caption{Tolerant deception rate.}
\label{fig:real_world}% \vspace{-4mm}
\end{figure}

To examine whether real-world behavioral distortions exhibit the
strategic structure assumed by our framework, we conduct validation on
COVID-19 misinformation, reporting, and epidemic trajectory datasets
(see details in Appendix~\ref{data}). 
Across CHECKED and CoAID, verified-false
content diffusion is highly concentrated, with content-level Gini
coefficients of 0.842 and 0.783, respectively. These results suggest
that false-information exposure follows structured patterns rather than
uniform random noise, while not implying specific sender intent. A
randomized accuracy-nudge experiment further shows that receiver
decisions are responsive to information interventions, improving
truth-versus-falsehood discernment by 0.034 on a 0--1 scale
($p=0.001$). %Together, these findings provide empirical support for modeling both structured message diffusion and adaptive receiver responses.

We further evaluate the integrated signaling--SVEAIR pipeline using CDC
vaccination and HHS hospitalization trajectories with controlled
semi-synthetic reporting distortions. The framework achieves the lowest
vaccination-state and $R_c$ errors under risk-dependent, persistent, and coordinated distortions, reducing vaccination-state error by approximately 72\%, 34\%, and 21\% relative to the strongest alternatives in these settings. Under mean-zero random errors, it performs comparably to simpler approaches, whereas temporally drifting bias favors raw-report methods. These results show that the proposed framework provides a conditional advantage when behavioral distortions exhibit structured patterns rather than arbitrary noise.

% \begin{table*}[h]
% \centering
% \caption{Stress-test outcomes across equilibria. Disease control score shows
% changes relative to baseline, while peak hospitalization reports ratios
% relative to baseline.}
% \label{tab:stress}
% \setlength{\tabcolsep}{5pt}
% \begin{tabular}{lcccccc}
% \toprule
% & \multicolumn{3}{c}{Disease Control Score ($\Delta$ vs. baseline)} & \multicolumn{3}{c}{Peak Hospitalization (ratio vs. baseline)} \\
% \cmidrule(lr){2-4}\cmidrule(lr){5-7}
% Factor & Separating & Partial Pooling & Pooling & Separating & Partial Pooling & Pooling \\
% \midrule
% Infection--hospitalization ratio & 0.065 & 0.046 & 0.028 & 1.993 & 1.985 & 1.975 \\
% Incentives & -0.085 & -0.065 & -0.042 & 1.021 & 1.029 & 1.035 \\
% Non-responsive share & -0.263 & -0.188 & -0.117 & 1.120 & 1.144 & 1.163 \\
% Vaccine efficacy & 0.325 & 0.320 & 0.243 & 0.989 & 0.981 & 0.972 \\
% \bottomrule
% \end{tabular}
% \end{table*}

\paragraph{Robustness of the SVEAIR–signaling-game pipeline to stress factors across equilibria.}

\begin{table}[h]
\centering
\caption{Stress-test outcomes across equilibria. Disease control score shows
changes relative to baseline, while peak hospitalization reports ratios
relative to baseline.}
\label{tab:stress}

\small
\setlength{\tabcolsep}{2.5pt}

\begin{tabular}{@{}lcccccc@{}}
\toprule
& \multicolumn{3}{c}{Disease Control Score ($\Delta$)}
& \multicolumn{3}{c}{Peak Hospitalization (ratio)} \\
\cmidrule(lr){2-4}
\cmidrule(lr){5-7}

Factor 
& Sep.
& Partial
& Pooling
& Sep.
& Partial
& Pooling \\

\midrule

IHR
& 0.065
& 0.046
& 0.028
& 1.993
& 1.985
& 1.975 \\

Incentives
& -0.085
& -0.065
& -0.042
& 1.021
& 1.029
& 1.035 \\

NRS
& -0.263
& -0.188
& -0.117
& 1.120
& 1.144
& 1.163 \\

Vaccine efficacy
& 0.325
& 0.320
& 0.243
& 0.989
& 0.981
& 0.972 \\

\bottomrule
\end{tabular}
\vspace{1mm} \parbox{\columnwidth}{ \footnotesize \textit{Note:} Disease control scores are reported as changes relative to the baseline setting, where positive values indicate improved control. Peak hospitalization values are ratios relative to the baseline peak hospitalization. IHR denotes the infection--hospitalization ratio and NRS denotes the non-responsive share. Sep., Partial, and Pooling denote the separating, partial pooling, and pooling equilibria, respectively. }

\end{table}

We perform a series of stress tests to validate our simulation framework under key factors, including infection–hospitalization ratios (IHRs), incentives for vaccination and masking, the non-responsive share (NRS) to surveys, and vaccine efficacy. Table~\ref{tab:stress} summarizes how robust epidemic control is to these perturbations under all equilibria. Differences in disease control scores and ratios of peak hospitalization are reported relative to the baseline setting defined in Table~\ref{tab:param-summary}. As expected, increasing the IHR approximately doubles peak hospitalization across all equilibria (1.975--1.993$\times$ the baseline peak), while producing modest improvements in disease control scores. Increasing vaccine efficacy provides the largest improvement in epidemic control, increasing the disease control score by 0.325, 0.320, and 0.243 under separating, partial pooling, and pooling equilibria, respectively, while slightly reducing peak hospitalization. By contrast, changes in incentives have relatively limited effects on epidemic outcomes, suggesting that moderate variations in behavioral incentives do not substantially alter the long-term control dynamics. Increasing the non-responsive share has a more pronounced negative impact, reducing disease control scores by 0.263, 0.188, and 0.117 across the three equilibria, highlighting the importance of obtaining sufficient behavioral participation for reliable inference. %Overall, the framework remains robust across stress factors, while the results indicate that clinical severity (IHR), intervention effectiveness (vaccine efficacy), and behavioral observability (NRS) influence epidemic control through distinct mechanisms.

\section{Conclusion}
This paper integrates a dynamic SVEAIR epidemic model with a game-theoretic signaling framework to study how strategic misreporting of vaccination and masking affects epidemic control and policy response. %By characterizing separating, partial pooling, and pooling equilibria, we 
Our results reveal main insights. Equilibrium structure largely determines both the informational content of self-reported behaviors and the achievable quality of control: separating $>$ partial pooling $>$ pooling. %Separating equilibria preserve truthful message content, allowing the PHA to track behavior accurately and achieve $R_c<1$ fastest; pooling equilibria collapse signals to the prior, forcing policies to rely on coarse, less effective interventions.
Even with nontrivial deception, infections can still be brought under control governed by signal-informed policy. %Imperfect signals remain informative: partial pooling retains enough structure so that the PHA can infer population behavior and design policies that keep $R_c$ below one.

%(3) adaptive, behavior-informed policies substantially outperform random policies that ignore signal content. Under partial pooling with low initial behavioral rates, the adaptive controller drives $R_c$ below one, accelerates coverage gains for vaccination and masking, reduces peak hospitalization, and improves both sender and receiver utilities, while the random policy remains supercritical for much of the horizon. Stress-test experiments further show that, conditional on equilibrium structure, epidemic severity is far more sensitive to clinical severity (IHR) and vaccine performance than to moderate changes in incentives or non-response, suggesting that signaling-aware policies are robust to a broad range of behavioral perturbations.

In future work, we plan to extend our framework to state-dependent deception, and richer message spaces. %that reflect modern non-pharmaceutical intervention (NPI) channels, and more detailed economic or network-based models. 
Such extensions would allow us to study how misreporting itself spreads over contact networks, and how PHAs can design feedback rules that remain effective in the face of decentralized, adaptive behavioral responses.

% \section{Acknowledgments}

% Identification of funding sources and other support, and thanks to
% individuals and groups that assisted in the research and the
% preparation of the work should be included in an acknowledgment
% section, which is placed just before the reference section in your
% document.

% This section has a special environment:
% \begin{verbatim}
%   \begin{acks}
%   ...
%   \end{acks}
% \end{verbatim}
% so that the information contained therein can be more easily collected
% during the article metadata extraction phase, and to ensure
% consistency in the spelling of the section heading.

% Authors should not prepare this section as a numbered or unnumbered {\verb|\section|}; please use the ``{\verb|acks|}'' environment.

% %%
% %% The acknowledgments section is defined using the "acks" environment
% %% (and NOT an unnumbered section). This ensures the proper
% %% identification of the section in the article metadata, and the
% %% consistent spelling of the heading.
% % \begin{acks}
% % To Robert, for the bagels and explaining CMYK and color spaces.
% % \end{acks}

\section*{Limitations and Ethical Considerations}

% Our framework makes several simplifying assumptions that can be relaxed in future work. First, the SVEAIR dynamics and the signaling layer are calibrated at the population level and do not capture individual-level heterogeneity, which may reshape epidemic trajectories, behavioral responses, and communication patterns. Second, we fix the non-responsive share and treat non-response as maximally deceptive. In practice, non-response may also arise from logistical barriers, limited access, or other non-strategic factors. Finally, we consider a single public health authority (PHA) with a fixed adaptive policy rule. Richer policy classes, including multi-agent PHAs and more diverse intervention mechanisms, may change the relative performance of equilibria and policies.
Our framework makes several simplifying assumptions that can be relaxed in future work. First, the SVEAIR dynamics and signaling layer are modeled at the population level and do not capture individual heterogeneity in behavior or transmission. Second, we fix the non-responsive share, whereas practical non-response may also arise from logistical or accessibility factors. Finally, we consider a single PHA with a fixed adaptive policy rule; richer policy classes and multi-agent decision settings may alter equilibrium outcomes.

Ethical considerations are important when applying strategic behavioral
models to public-health decision making. Our framework uses no personally
identifiable information and is designed for population-level analysis,
not individual profiling or automated decisions. However, such models
could be misused for manipulative interventions or policies that
disproportionately affect populations. Responsible deployment requires
transparent assumptions, privacy-preserving practices, diverse validation,
and human oversight.

\section*{Generative AI Usage}

During the preparation of this manuscript, the authors used generative AI tools for limited assistance with writing refinement, grammar checking, and improving presentation clarity. All research ideas, mathematical formulations, experimental designs, analyses, and conclusions were developed and verified by the authors. The authors take full responsibility for the content of this paper.

\clearpage
\bibliographystyle{ACM-Reference-Format}
\bibliography{signaling_games}

\newpage
\appendix
\setcounter{table}{0}
\renewcommand{\thetable}{A\arabic{table}}

\setcounter{figure}{0}
\renewcommand{\thefigure}{A\arabic{figure}}

\section{Epidemiological Model}
\label{supp:epi_model}
The epidemiological model studied in this paper is given by:

\begin{align}
S^{\prime}(t) &= \Lambda - \lambda(t)(1-\eta)S(t) - (\psi + \mu)S(t), \nonumber \\
V^{\prime}(t) &= \psi S(t) - \lambda(t)(1-\delta)V(t) - \mu V(t), \nonumber \\
E^{\prime}(t) &= \lambda(t)\left((1-\eta)S(t) + (1-\delta)V(t)\right)- (k+\mu)E(t), \nonumber \\
A^{\prime}(t) &= k(1-p)E(t) - (\gamma+\mu)A(t), \nonumber \\
I^{\prime}(t) &= kpE(t) - (\gamma+\mu)I(t), \nonumber \\
R^{\prime}(t) &= \gamma (A(t) + I(t)) - \mu R(t), \label{eq_ODE}
\end{align}
where $X^{\prime}(t) = \frac{d X(t)}{d t}$. We define the vaccination rate as $\psi\ (0\leq \psi \leq 1)$ and the masking rate as $\eta\ (0\leq \eta \leq 1)$, indicating that susceptible individuals reduce the contact rate by a fraction of $\eta$. 
Vaccine efficiency is described using the factor $\delta$, which quantifies the rate of reduction in infection rate for vaccinated individuals. 
Disease-free individuals enter the susceptible class through birth or immigration at a constant rate $\Lambda$ and their natural death rate is denoted by $\mu$. It is assumed that susceptible individuals become infected at a rate of $\lambda(t) = \frac{\beta_0(t)(I(t)+bA(t))}{K(t)}$. Here, $\beta_0$ is the transmission rate, $b$ is the rate of relative infectiousness of asymptomatic cases compared to symptomatic cases, and $K(t) = S(t) + V(t) + E(t) + I(t) + A(t)+R(t)$. We define $1/k$ as the duration of the latent period, which is defined as the time between when a person is exposed to the disease and when they become infectious. We also assume that a proportion $p$ of infected individuals become symptomatic.  Both symptomatic and asymptomatic individuals are assumed to recover at a rate of $\gamma$. 

Likewise, the ODE model estimated by the receiver (i.e., the PHA) is given by:

\begin{align}
\hat{S}'(t) &= \Lambda - \hat{\lambda}(t) (1 - \hat{\eta}) \hat{S}(t) - (\hat{\psi} + \mu) \hat{S}(t), \nonumber \\
\hat{V}'(t) &= \hat{\psi} \hat{S}(t) - \hat{\lambda}(t) (1 - \delta) \hat{V}(t) - \mu \hat{V}(t), \nonumber \\
\hat{E}'(t) &= \hat{\lambda}(t) \left((1 - \hat{\eta}) \hat{S}(t) + (1 - \delta) \hat{V}(t)\right) - (k + \mu) \hat{E}(t), \nonumber \\
\hat{A}'(t) &= k(1 - p) \hat{E}(t) - (\hat{\gamma} + \mu) \hat{A}(t), \nonumber \\
\hat{I}'(t) &= kp \hat{E}(t) - (\hat{\gamma} + \mu) \hat{I}(t), \nonumber \\
\hat{R}'(t) &= \hat{\gamma} (\hat{A}(t) + \hat{I}(t)) - \mu \hat{R}(t).
\label{eq_ODE_est}
\end{align}

%The ODE model as described by Eq.~\ref{eq_ODE} is illustrated by the logical flow diagram of Fig.~\ref{fig:sveair}. Next, we discuss in detail the specifics of signaling game. For notational convenience, we may omit the time index $t$; however, time dependency is understood to be implicit in the formulation.

\section{Behavioral Types}
Table~\ref{tab:truth-table-9types}
denote 9 behavioral types of individuals capturing their true status and how they report their vaccination and/or masking status. To interpret the `True Status' column, note that 0 indicates compliance and 1 indicates non-compliance. The first number in the True Status column is meant for vaccination status and the second number denotes masking status. 
In row 2, for instance, the true status is 01, i.e., the individual is vaccinated but not masking.
However, this individual reports 00 (i.e., they are honest about their vaccination status but deceptive about their masking status). The number of possible types is not a true cross-product because individuals who are masked or who are vaccinated are assumed to have no incentive to lie about their status. Thus there is a `Deceptive' value in a row only if the corresponding entry in True Status is a 1 (i.e., non-compliance).

\begin{table}[t]
\centering
\caption{Nine behavioral types defined in terms of how people with different compliance combinations report truthfully or deceive about their vaccination and/or masking status. (See A.2 for explanation.)}
\label{tab:truth-table-9types}
\begin{tabular}{l|l|l|l}
\hline
\textbf{Type} & \textbf{True status} & \textbf{Vaccine report} & \textbf{Mask report} \\
\hline
$T_1$ & 00 & 0 (Honest) & 0 (Honest) \\
$T_2$ & 01 & 0 (Honest) & 0 (Deceptive) \\
$T_3$ & 01 & 0 (Honest) & 1 (Honest) \\
$T_4$ & 10 & 0 (Deceptive) & 0 (Honest) \\
$T_5$ & 10 & 1 (Honest) & 0 (Honest) \\
$T_6$ & 11 & 0 (Deceptive) & 0 (Deceptive) \\
$T_7$ & 11 & 0 (Deceptive) & 1 (Honest) \\
$T_8$ & 11 & 1 (Honest) & 0 (Deceptive) \\
$T_9$ & 11 & 1 (Honest) & 1 (Honest) \\
\hline
\end{tabular}
\end{table}

\section{Equilibrium Characterization}
\label{supp:equilibrium}
\subsection{Pooling equilibrium}
In pooling, all types send the same message $m^* = 00$ (claim full compliance).
Here, the sender Strategy can be written as:
\begin{equation}
    g^*(m\mid \theta) = 
\begin{cases}
1 & \text{if } m = 00 \\
0 & \text{otherwise}
\end{cases}
\quad \forall \theta
\end{equation}
Receiver belief  can be written as:
\begin{equation}
    p^*(\theta \mid m = 00) = q(\theta) \quad \text{(prior probability)}
\end{equation}
The PHA learns nothing from messages and must rely on rior population compliance rates and observable hospitalizations $H(t)$.
Pooling exists when deviating to truth-telling is not profitable. For a non-compliant individual ($\theta \neq 00$):
$
U_s(m = 00 \mid \theta) \geq U_s(m = \theta \mid \theta)$.
This holds when incentives dominate semantic accuracy. 
$
I_v + I_m \succ \frac{1}{|M|} \cdot \frac{1}{K}$. 

\[
e^{-a R_c^{\text{pooling}}} (I_v + I_m) > \lambda_1 \cdot K^{-1}
\]
\emph{Receiver's Best Response:}
Under pooling, the PHA must estimate $\hat{\omega}, \hat{\epsilon}$ from hospitalizations alone, i.e.,
\[
\hat{\psi}(t) = \omega_0 + \psi_{\Delta}(t)
\]
\[
\hat{\eta}(t) = \eta + \eta_{\Delta}(t),
\]
Where corrections $\psi{\Delta}, \eta{\Delta}$ solve:
\[
\min_{\psi_{\Delta}, \eta_{\Delta}} \left| H(t) - \xi\hat{I}(t; \psi + \psi_{\Delta}, \eta + \eta_{\Delta}) \right|^2
\]
subject to distortion constraint:
$
D(t) \leq D^*$.
\emph{Deception Rate}
can be written as:
$\tilde{m}^* = \frac{1}{2} \sum_{k \in K_c} \| m_k - \theta_k \|_H + |K_s|
$, where $\| m_k - \theta_k \|_H$ is Hamming distance. In pure pooling with $m = 00$ for all, we obtain
$
\tilde{m}^* = \frac{1}{2} |K_c| \cdot \mathbb{E}[\| 00 - c \|_H] + |K_s|
$.
If population has baseline rates $(\psi^\ast, \eta^\ast)$ (round truth), then we get
\begin{equation}
\begin{aligned}
\tilde{m}^* &= |K_c|(1 - \psi^\ast)(1 - \eta^\ast) + \frac{1}{2}|K_c|\psi^\ast(1 - \eta^\ast) \\ &+\frac{1}{2}|K_c|(1-\psi^\ast) \eta^\ast+  |K_s|.
\end{aligned}
\end{equation}

\subsection{Partial pooling equilibrium}
\label{supp:equilibrium-pooling}
Here, some users separate and others pool, with mixing probability defined as $\alpha$. We consider a simplistic model of sender strategy as written below:
\begin{equation}
    g^*(m\mid \theta) =
\begin{cases}
1 & \text{if } \theta = 00 \text{ and } m = 00 \\
\alpha & \text{if } \theta \in \{01,10\} \text{ and } m = 00 \\
1-\alpha & \text{if } \theta \in \{01,10\} \text{ and } m = \theta \\
1 & \text{if } \theta = 11 \text{ and } m = 00 \\
0 & \text{otherwise}
\end{cases}
\end{equation}
Let:
$\pi_{00}$ represents fraction truly compliant (vaccinated + masking),
 $\pi_{01}$ represents fraction vaccinated but not masking
$\pi_{10}$  represents fraction masking but not vaccinated, and 
 $\pi_{11}$  represents  fraction non-compliant on both, with
$
\pi_{00} + \pi_{01} + \pi_{10} + \pi_{11} = 1.
$ Given mixing probability $\alpha$, the proportion sending each message becomes,
\begin{equation}
\begin{aligned}
P(m = 00) &= \pi_{00} + \alpha(\pi_{01} + \pi_{10}) + \pi_{11}\\
P(m = 01) &= (1 - \alpha)\pi_{01}\\
P(m = 10) &= (1 - \alpha)\pi_{10}\\
P(m = 11) &= 0 \quad \text{(no one admits full non-compliance)}    
\end{aligned}
\end{equation}
Using Bayes' rule, when PHA receives $m = 00$, and $\theta_{ij}$ representing the estimate of type $``ij"$,
\begin{equation}\begin{aligned}
q(\theta_{00} \mid m = 00) &= \frac{\pi_{00}}{\pi_{00} + \alpha(\pi_{01} + \pi_{10}) + \pi_{11}}\\
q(\theta_{01} \mid m = 00) &= \frac{\alpha \pi_{01}}{\pi_{00} + \alpha(\pi_{01} + \pi_{10}) + \pi_{11}}\\
q(\theta_{10} \mid m = 00) &= \frac{\alpha \pi_{10}}{\pi_{00} + \alpha(\pi_{01} + \pi_{10}) + \pi_{11}}\\
q(\theta_{11} \mid m = 00) &= \frac{\pi_{11}}{\pi_{00} + \alpha(\pi_{01} + \pi_{10}) + \pi_{11}}\end{aligned}
\end{equation}
The PHA estimates vaccination rate as:
\begin{equation}
\begin{aligned}
\hat{\psi}(\alpha) &= \pi_{00} + \pi_{01} + \alpha\pi_{10}+ \pi_{11}
\end{aligned}
\end{equation}
\begin{equation}
\hat{\eta}(\alpha) = \pi_{00} + \pi_{10} + \alpha\pi_{01}+ \pi_{11}
\end{equation}
Similarly for masking:
\begin{equation}
\hat{\epsilon}(\alpha) = P(m = 00) \cdot \frac{\pi_{00} + \alpha \pi_{10}}{\pi_{00} + \alpha(\pi_{01} + \pi_{10}) + \pi_{11}} + (1 - \alpha)\pi_{10}
\end{equation}
\textit{Key observation:} $\hat{\psi}(\alpha)$ and $\hat{\eta}(\alpha)$ are overestimates of true compliance because of pooling. From eq.~\eqref{eq_Rc}, $R_c(\alpha) = R_c(\hat{\psi},\hat{\eta})$. As $\alpha$ increases (indicating more pooling), $\hat{\psi}(\alpha)$ and $\hat{\eta}(\alpha)$ both increase because the PHA believes more individuals are vaccinated and masking, even though true compliance remains unchanged; consequently, the PHA under-responds, leading to an increase in $R_c(\alpha)$. The semantic accuracy term can be written when the actual type is $\theta_{01}$ as:
\begin{equation}
U_{L,\text{pool}}^{01}(\alpha) = \alpha \cdot q(\theta_{01} \mid m = 00) + (1 - \alpha)
\end{equation}
For type $\theta_{01}$ pooling:
\begin{equation}
U^{01}_{\text{pool}}(\alpha) = e^{-a R_c(\alpha)} I_m + \lambda_1 \cdot \big[ \alpha \cdot q(\theta_{01} \mid m = 00) + (1 - \alpha) \big],
\end{equation}
and for type $\theta_{01}$ separating:
\begin{equation}
U^{01}_{L,\text{sep}} = 0 + \lambda_1 \cdot 1 = \lambda_1
\end{equation}
For optimal mixing,
we need
$U^{01}_{\text{pool}}(\alpha^*) = U^{01}_{\text{sep}}. $ This leads to:
\begin{equation}
e^{-a R_c(\alpha^*)} I_m + \lambda_1 \big[ \alpha^* \cdot q(\theta_{01} \mid m = 00) + (1 - \alpha^*) \big] = \lambda_1
\end{equation}
Simplifying this, we obtain
\begin{equation}
\alpha^* = \frac{e^{-a R_c(\alpha^*)} I_m}{\lambda_1 \big[ 1 - q(\theta_{01} \mid m = 00) \big]}.
\end{equation}
Substituting Bayes' rule, we obtain:
\begin{equation}
e^{-a R_c(\alpha^*)} I_m = \lambda_1 \alpha^* \left[ 1 - \frac{\alpha^* \pi_{01}}{\pi_{00} + \alpha^*(\pi_{01} + \pi_{10}) + \pi_{11}} \right]
\end{equation}
This is an implicit equation in $\alpha^*$ because $R_c(\alpha^*)$ depends on $\alpha^*$ and the right-hand side contains $\alpha^*$ in a nonlinear way. Therefore, it generally requires a numerical solution. For small deviations from separation, we use a linear approximation:
\begin{equation}
R_c(\alpha) \approx R_c(0) + \beta \alpha
\end{equation}
where
\begin{equation}
\beta = \left. \frac{d R_c}{d \alpha} \right|_{\alpha = 0} > 0
\end{equation}
Similarly, the conditional probability can be approximated as:
\begin{equation}
q(\theta_{01} \mid m = 00) \approx \frac{\alpha \pi_{01}}{\pi_{00} + \pi_{11}}
\end{equation}
Then, using the linear approximation for $R_c(\alpha)$ and ignoring higher-order terms, we have:
\begin{equation}
e^{-a [R_c(0) + \beta \alpha]} I_m \approx \lambda_1 \alpha \left[ 1 - \frac{\alpha \pi_{01}}{\pi_{00} + \pi_{11}} \right]
\end{equation}
For small $\alpha$, ignoring $\alpha^2$ terms, this simplifies to:
\begin{equation}
e^{-a R_c(0)} (1 - a \beta \alpha) I_m \approx \lambda_1 \alpha
\end{equation}
Rearranging terms gives:
\begin{equation}
e^{-a R_c(0)} I_m \approx \lambda_1 \alpha + a \beta e^{-a R_c(0)} I_m \cdot \alpha
\end{equation}
Finally, solving for $\alpha^*$ yields:
\begin{equation}
\alpha^* \approx \frac{e^{-a R_c(0)} I_m}{\lambda_1 + a \beta e^{-a R_c(0)} I_m}
\end{equation}

\subsubsection{Existence conditions}
For partial pooling equilibrium to exist, we require $0 < \alpha^* < 1$. To find a lower bound, consider when pure separation ($\alpha = 0$) is not an equilibrium. At $\alpha = 0$, suppose type $\theta_{01}$ deviates to pool. The utility from deviating is:
\begin{equation}
\begin{aligned}
U_{L,\text{pool}}^{01}(\alpha=0)
&= e^{-a R_c(0)} I_m + \lambda_1 \big[(1 - 0) + 0 \cdot 0 \big] \\
&= e^{-a R_c(0)} I_m + \lambda_1 .
\end{aligned}
\end{equation}
The utility from separating is:
\begin{equation}
U_{L,\text{sep}}(\alpha=0) = \lambda_1
\end{equation}
For separation to be unstable, we need:
\begin{equation}
e^{-a R_c(0)} I_m + \lambda_1 > \lambda_1
\end{equation}
This simplifies to:
\begin{equation}
e^{-a R_c(0)} I_m > 0
\end{equation}
which is always true, implying that pure separation cannot be an equilibrium when $I_m > 0$. To determine when pure pooling ($\alpha = 1$) is not an equilibrium, note that at $\alpha = 1$, everyone pools. The conditional probability is:
\begin{equation}
q(\theta_{01} \mid m = 00) = \frac{\pi_{01}}{\pi_{00} + \pi_{01} + \pi_{10} + \pi_{11}} = \pi_{01}
\end{equation}
The utility from pooling is:
\begin{equation}
\begin{aligned}
U_{L,\text{pool}}(1) &= e^{-a R_c(1)} I_m + \lambda_1 \big[(1 - 1) + 1 \cdot \pi_{01}\big]\\& = e^{-a R_c(1)} I_m + \lambda_1 \pi_{01}
\end{aligned}
\end{equation}
The utility from deviating to separate is:
\begin{equation}
U_{L,\text{sep}} = \lambda_1
\end{equation}
For pooling to be unstable (deviation profitable), we require:
\begin{equation}
e^{-a R_c(1)} I_m + \lambda_1 \pi_{01} < \lambda_1
\end{equation}
Since $(1 - \pi_{01}) \leq 1$, the condition for pooling to be unstable simplifies to:
\begin{equation}
e^{-a R_c(1)} I_m < \lambda_1
\end{equation}
From the approximation
\begin{equation}
\alpha^* \approx \frac{e^{-a R_c(0)} I_m}{\lambda_1 + a \beta e^{-a R_c(0)} I_m},
\end{equation}
for $\alpha^* < 1$, we require:
\begin{equation}
e^{-a R_c(0)} I_m < \lambda_1 + a \beta e^{-a R_c(0)} I_m.
\end{equation}
Rearranging gives:
\begin{equation}
e^{-a R_c(0)} I_m (1 - a \beta) < \lambda_1.
\end{equation}

For small $a \beta$, this implies:
\begin{equation}
\lambda_1 > e^{-a R_c(0)} I_m - a \beta e^{-a R_c(0)} I_m.
\end{equation}

Combining this lower bound on $\lambda_1$ with the upper bound from pooling instability, we obtain:
\begin{equation}
e^{-a R_c(1)} I_m < \lambda_1 < e^{-a R_c(0)} I_m + a \beta e^{-a R_c(0)} I_m.
\label{bounds_partialpooling}
\end{equation}
From \eqref{bounds_partialpooling} we infer the following conclusions.
The semantic weight $\lambda_1$ must lie in a Goldilocks zone for mixed to exist. If $\lambda_1$ is too low (below the lower bound), semantic accuracy does not matter much, incentives dominate, and everyone pools ($\alpha = 1$), resulting in a full pooling equilibrium. If $\lambda_1$ is too high (above the upper bound), semantic accuracy matters a lot, people value being understood correctly, and everyone separates ($\alpha = 0$), leading to a separating equilibrium.
If $\lambda_1$ is just right (within the window), there is a trade-off between incentives and accuracy, and mixed occurs with $0 < \alpha^* < 1$. In this case, some individuals lie while others tell the truth, creating a stable mixing equilibrium.

\subsection{Tolerance frontiers}

In partial pooling equilibrium, the key question is: \emph{to what extent can deception be tolerated while still ensuring the epidemic converges to an equilibrium with negligible infections?}

\begin{theorem}[Partial Pooling and Tolerable Deception]
\label{thm:partial}
There exists a partial pooling equilibrium in which some non-compliant individuals misreport while others report truthfully if and only if the induced deception level $\tilde{m}$ remains below a critical tolerance threshold $\tilde{m}_{\max}$ such that the PHA can select policies satisfying
\[
R_c(\psi,\eta) < 1 \quad \text{and} \quad \widehat{\psi} > \psi_{HI}.
\]
In this regime, epidemic control is preserved despite strategic misreporting.
\end{theorem}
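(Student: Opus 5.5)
The plan is to treat the tolerance threshold as the image of a monotone map from the mixing probability $\alpha$ to the deception level, and then combine Theorem~\ref{thm:partial-pooling} with the herd-immunity threshold~\eqref{eq:psi_HI}. The natural definition is
\[
\tilde m_{\max} = \tilde m(\alpha_{\max}), \qquad \alpha_{\max} = \sup\{\alpha\in(0,1): R_c(\psi_r(\alpha),\eta_r(\alpha))<1\},
\]
where $\psi_r,\eta_r$ are the PHA's recommended levels after the correction terms $\widetilde\psi,\widetilde\eta$ are applied.

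The first step is to parametrize the deception level by $\alpha$. Under the partial pooling sender strategy of Appendix~\ref{supp:equilibrium-pooling}, types $01$ and $10$ misreport with probability $\alpha$ and type $11$ always reports $00$. Substituting this into the definition of $\tilde m$ gives
\[
\tilde m(\alpha) = \frac{1}{|\mathcal K|}\Big[|\mathcal K_c|\big(\tfrac12\alpha(\pi_{01}+\pi_{10})+\pi_{11}\big)+|\mathcal K_s|\Big].
\]
This expression is affine and strictly increasing in $\alpha$ whenever $\pi_{01}+\pi_{10}>0$. So the condition ``$\tilde m<\tilde m_{\max}$'' is equivalent to ``$\alpha<\alpha_{\max}$'', and the task reduces to statements about $\alpha$.

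The second step relates $\alpha$ to epidemic control. The Key observation in Appendix~\ref{supp:equilibrium-pooling} says that $\widehat\psi(\alpha)$ and $\widehat\eta(\alpha)$ increase in $\alpha$. Hence $R_c(\alpha)$ increases, with $dR_c/d\alpha=\beta>0$ at $\alpha=0$. Assume that $R_c$ is continuous and strictly monotone in $\alpha$ on $[0,1]$, which I would argue from~\eqref{eq_Rc} via the chain rule. Under that assumption the set of $\alpha$ that admit a policy with $R_c<1$ is an interval $[0,\alpha_{\max})$. Since~\eqref{eq_Rc} is decreasing in $\psi$ whenever $R_0(1-\delta)<1$, the condition $R_c<1$ at masking level $\eta$ is equivalent to $\psi>\psi_{HI}(\eta)$. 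The two requirements in the statement therefore collapse to one, once we use $\widehat\psi=\psi_r$, the PHA's recommended level.

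The third step proves the two directions.

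For ($\Leftarrow$), suppose $\tilde m<\tilde m_{\max}$. I would use the existence window~\eqref{eq:existence-condition}, with $R_c(1)$ replaced by $R_c(\alpha_{\max})$. Together with continuity of the fixed-point map~\eqref{eq:fixed-point} and the intermediate value theorem, this yields an equilibrium $\alpha^*\in(0,\alpha_{\max})$. At that $\alpha^*$ the correction terms can be chosen so that $\psi_r>\psi_{HI}$.

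For ($\Rightarrow$), suppose a partial pooling equilibrium with $R_c<1$ exists. Then its $\alpha^*$ lies in the admissible interval, so $\tilde m(\alpha^*)<\tilde m_{\max}$ by the monotonicity from the first step.

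The main obstacle is the ($\Leftarrow$) direction. Equilibrium existence (fixing $\alpha^*$ through $\lambda_1$) and epidemic control (fixing $\alpha_{\max}$ through $R_c$) are logically separate, so the ``iff'' only holds if the equilibrium $\alpha^*$ is tied to the deception level under consideration. I would handle this by reading the theorem as a statement over the family of games indexed by $\lambda_1$. Because the approximate $\alpha^*$ in~\eqref{eq:alpha-approx} decreases continuously in $\lambda_1$, every $\alpha$ in the existence window is realized by some $\lambda_1$. I would state that quantifier explicitly, and also state as an assumption that the correction terms are unconstrained enough to push $\psi_r$ past $\psi_{HI}$.
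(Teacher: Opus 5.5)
Your overall structure is sound and more explicit than the paper's, but several supporting steps would fail as written.

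\textbf{Comparison with the paper.} The paper's own argument is much thinner than yours. It defines the frontier directly as the value of the program $\max_{g,\widetilde\psi}\tilde m$ subject to $\psi^{\ast}+\widetilde\psi>\psi_{HI}$, so the ``iff'' holds essentially by construction. It merges the two control conditions into the single $\psi_{HI}$ constraint, just as you do. It never parametrizes by $\alpha$, never invokes Theorem~\ref{thm:partial-pooling}, and never argues the equilibrium-existence half. Your formula for $\tilde m(\alpha)$ is correct: at $\alpha=1$ it reproduces the paper's pure-pooling count up to normalization. Your $\alpha_{\max}$ is the one-parameter version of the paper's program. You are also right that existence (governed by $\lambda_1$) and controllability (governed by $R_c$) are independent. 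In fact the literal forward direction is false whenever $\alpha_{\max}<1$, because by the exact inversion given below every $\alpha\in(\alpha_{\max},1)$ is an equilibrium mixing probability for some $\lambda_1$. So the clause ``with $R_c<1$'' that you put into the hypothesis of ($\Rightarrow$) is part of the intended reading, not something you derive.

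\textbf{Monotonicity of $R_c$ in $\alpha$.} This does not follow from Eq.~\eqref{eq_Rc} by the chain rule. We have $\partial R_c/\partial\eta=-R_0\mu/(\psi+\mu)<0$ and $\partial R_c/\partial\psi=R_0\mu(\eta-\delta)/(\psi+\mu)^2$. The latter is negative whenever $\eta<\delta$, which is the only regime where control is at stake: $R_c$ is $R_0$ times a convex combination of $1-\delta$ and $1-\eta$, so $\eta\ge\delta$ already gives $R_c\le R_0(1-\delta)<1$. Plugging in the inflated estimates $\widehat\psi(\alpha),\widehat\eta(\alpha)$ therefore makes $R_c$ \emph{decrease} in $\alpha$. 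The increase asserted in the appendix concerns realized transmission under the PHA's under-response. That is a property of the closed-loop map $\alpha\mapsto(\psi_r(\alpha),\eta_r(\alpha))$, and you must state it as an assumption. It is load-bearing, because it is what makes the controllable set an interval $[0,\alpha_{\max})$. Your equivalence $R_c<1\iff\psi>\psi_{HI}$ is still right, by rearranging directly with $R_0(1-\delta)<1$, but not because $R_c$ is decreasing in $\psi$.

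\textbf{The IVT step.} Set $F(\alpha)=\lambda_1\alpha\bigl(1-q(\theta_{01}\mid m=00)\bigr)-e^{-aR_c(\alpha)}I_m$. You always have $F(0)<0$. A sign change on $(0,\alpha_{\max})$ therefore needs $\lambda_1\alpha_{\max}(1-q)>e^{-aR_c(\alpha_{\max})}I_m$. This is strictly stronger than the shifted window's $\lambda_1>e^{-aR_c(\alpha_{\max})}I_m$, so the window does not give you the sign change. The paper's own lower bound has the same defect: it discards the factor $1-\pi_{01}$ in the unsafe direction.

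\textbf{Realizability.} Using \eqref{eq:alpha-approx} only covers small $\alpha$. Instead, invert \eqref{eq:fixed-point} exactly: $\lambda_1(\alpha)=e^{-aR_c(\alpha)}I_m/\bigl(\alpha(1-q(\theta_{01}\mid m=00))\bigr)$. This is positive and finite for every $\alpha\in(0,1)$ when $\pi_{00}+\pi_{11}>0$, and it needs neither monotonicity in $\lambda_1$ nor the approximation.

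\textbf{The correction-term assumption.} Drop the assumption that corrections can always push $\psi_r$ past $\psi_{HI}$. Imposed globally, it forces $\alpha_{\max}=1$ and makes the frontier vacuous. For $\alpha^\ast<\alpha_{\max}$ it already holds by the definition of $\alpha_{\max}$.
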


The population strategy that maximizes deception while preserving epidemic control can be expressed as:
\begin{equation}
\begin{aligned}
g^{\ast}(\boldsymbol{m}\mid \boldsymbol{\theta})
= \argmax_{g(\boldsymbol{m}\mid\boldsymbol{\theta}),\,\tilde{\psi}} \ \tilde{m}
\quad
\text{s.t.}\quad
\psi^{\ast} + \tilde{\psi} > \psi_{HI}.
\end{aligned}
\end{equation}

The resulting equilibrium deception level is given by:
\begin{equation}
\begin{aligned}
\tilde{m}^{\ast}
=
\frac{1}{2}\sum_{k\in \mathcal{K}_c}\sum_{m_k} g^{\ast}(m_k\mid \theta_k)\lVert m_k-m_k^{(\ast)} \rVert_H
+ |\mathcal{K}_s|.
\end{aligned}
\end{equation}

This characterization defines a \emph{tolerance frontier} that quantifies how much strategic misreporting can be absorbed before epidemic control breaks down.

We further decompose the deception into vaccination and masking components.
Let $h_v(m_k,m_k^{(*)}) \in \{0,1\}$ and $h_\eta(m_k,m_k^{(*)}) \in \{0,1\}$
denote the Hamming distance restricted to the vaccination bit and the
masking bit, respectively. Then the vaccination deception $\tilde{m}^{(v)}$
and masking deception $\tilde{m}^{(\eta)}$ can be written as
\begin{align}
\tilde{m}^{(v)}
  &= \frac{1}{|\mathcal K|}
\left[\sum_{k \in \mathcal{K}_c} \sum_{m_k} g(m_k \mid \theta_k)\,
     h_v\!\bigl(m_k, m_k^{(*)}\bigr)
     + \abs{\mathcal{K}_s}\right], \label{eq:vax-deception} \\[4pt]
\tilde{m}^{(\eta)}
  &= \frac{1}{|\mathcal K|}
\left[\sum_{k \in \mathcal{K}_c} \sum_{m_k} g(m_k \mid \theta_k)\,
     h_\eta\!\bigl(m_k, m_k^{(*)}\bigr)
     + \abs{\mathcal{K}_s}\right]. \label{eq:mask-deception}
\end{align}

\section{Weekly Interactive Loop Between Population and PHA}
\label{supp:weekly-loop}

\begin{table*}[t]
\centering
\begin{minipage}{\textwidth}
\begin{algorithm}[H]
\caption{Weekly Interactive Loop Between Population and PHA}
\label{alg:weekly_loop}
\begin{algorithmic}[1]
\Require Initial states $(S,V,E,A,I,R)$, initial policies $(\psi_0,\eta_0)$, parameters $(\beta_0,k,p,\gamma,\mu,\xi)$
\For{each week $t = 1$ to $T$}
    \State \textbf{Population reporting:}
    \For{each individual $i \in \mathcal{K}$}
        \State Decide true behaviors $(v_i, m_i)$ (vaccination, masking)
        \State Generate reported message $(\hat{v}_i, \hat{m}_i)$ according to equilibrium type (truthful or deceptive)
    \EndFor
    \State Aggregate reports to obtain $\hat{\psi}_t$, $\hat{\eta}_t$, and deception rate $d_t$
    \Statex

    \State \textbf{PHA update:}
    \State Observe hospitalization $H_t = \xi I_t$
    \State Use internal SVEAIR model and current beliefs to compute $\hat{H}_t$
    \State Compute distortion $D_t = (H_t - \hat{H}_t)^2$
    \If{$D_t > D^*$}
        \State Update policy recommendations via a gradient step:
        \Statex\hspace{1.2em}$\psi_{r,t} = \psi_t + \Delta_\psi,\quad \eta_{r,t} = \eta_t + \Delta_\eta$
        \State where $(\Delta_\psi,\Delta_\eta)$ are obtained by minimizing $D_t$
    \Else
        \State Keep previous recommendations $(\psi_{r,t}, \eta_{r,t}) \gets (\psi_t, \eta_t)$
    \EndIf
    \Statex

    \State \textbf{Epidemic propagation:}
    \State Integrate SVEAIR ODEs for one week using $(\psi_{r,t}, \eta_{r,t})$
    \State Update population states $(S,V,E,A,I,R)_{t+1}$
    \State Store weekly metrics: $R_c(t)$, deception rate $d_t$, distortion $D_t$, utilities
\EndFor
\end{algorithmic}
\end{algorithm}
\end{minipage}
\end{table*}

Algorithm~\ref{alg:weekly_loop} summarizes the iterative feedback mechanism that governs the interaction between the population and the PHA in each simulation cycle. Each week, individuals report (truthfully or deceptively) their behavioral status, and the PHA updates its epidemiological model and policy recommendations based on these reports and observed hospitalization outcomes. The two-stage loop between players sketched in Figure~\ref{fig:System_Model}:

\begin{enumerate}
\item {\bf Individuals reporting.} Individuals decide whether to reveal or to misreport their current vaccination and masking status, weighing economic incentives against the costs of deception. To reflect realistic behavioral constraints, only non-compliers who forego either vaccination or masking are eligible to lie for the corresponding behavior. Moreover, individuals in the recovered (R) compartment are barred from deception because their acquired immunity leaves them little incentive to falsify their status.

\item {\bf Public Health Authority (PHA) updates.} The PHA receives the (possibly distorted) reports, updates its belief over population behavior, and issues revised policy recommendations for vaccination uptake and contact-rate reduction.
\end{enumerate}

\section{Experimental Settings}
\label{supp:settings} 

\subsection{Experimental conditions.}
We evaluate the model over a full factorial grid: three equilibrium structures (separating, partial pooling, pooling) crossed with public-health policy (adaptive vs. random), baseline behavioral-rate level (low vs. high), and incentives for vaccination and masking.
Daily outputs are aggregated into weekly steps. Epidemiological and behavioral series are smoothed with a Savitzky–Golay filter, whereas strategic variables are updated by exponential smoothing to retain short-term fluctuations.

\subsection{Baselines.} Two baselines benchmark the two-player signaling game schemes studied here:

\begin{enumerate}
\item {\bf No interaction}. Individuals never report behaviors and the PHA issues no recommendations, providing a lower-bound for epidemic control across equilibria.

\item {\bf Random policy.} PHA replaces its adaptive rule with weekly recommendations perturbed by zero-mean noise whose variance matches the empirical variance of the adaptive policy’s gradient (estimated from hospitalisations), isolating the value of behavior-aware adaptation.
\end{enumerate}

\subsection{Datasets}
\label{data}
\subsubsection{Simulations}
As described in the main text, for each scenario considered, we run Monte-Carlo simulations of a K=10{,}000–person population over a 26-week horizon with the baseline parameter set listed in Table~\ref{tab:param-summary}.

\begin{table*}[h]
\begin{minipage}{\textwidth}
\centering
\caption{Summary of model parameters and baseline values used in our experiments.}
\label{tab:param-summary}
% \small
\begin{tabular}{l|l|l|l}
\toprule
\textbf{Parameter} & \textbf{Description} & \textbf{Value} & \textbf{References} \\
\midrule
\multicolumn{4}{l}{\textit{Epidemiological model parameters}} \\
\midrule
$\mu$         & Natural death rate                         & $1/(75 \times 365)$ & ~\cite{Longini2004}\\
$\beta_0$     & Baseline transmission rate                 & $0.35$ & ~\cite{JING2021}\\
% $\psi$        & Vaccination rate                           & varies \\
$\delta$      & Vaccine efficacy                                      & $0.45$ & ~\cite{Dawood2020}\\
$1/\gamma$    & Infectious period (days)                              & $10$ & ~\cite{Walsh2020}\\
$b$           & Infectiousness of asymptomatic                        & $0.5$ & ~\cite{Longini2004}\\
% $\eta$        & Masking/contact reduction                             & varies \\
$1/k$         & Latent period (days)                                  & $5$ & ~\cite{RELUGA2011}\\
$p$           & Probability exposed to infected                       & $0.67$ &~\cite{Longini2004}\\
$\xi$         & Hospitalization ratio                                 & $0.05$ & ~\cite{FoxPNAS2022} \\
\midrule
\multicolumn{4}{l}{\textit{Signaling-game parameters}} \\
\midrule
$I_v$         & Lying incentive for vaccination                       & $1.0$ & ~NA\\
$I_m$         & Lying incentive for masking                           & $0.5$ & ~NA\\
$\lambda_1$   & Weight on semantic loss                               & $0.2$ & ~NA\\
$\alpha$      & Surprise weight in receiver utility                   & $1.0$ & ~NA\\
$a$           & Economic factor                                       & $0.5$ & ~NA\\
% \hline
% \multicolumn{3}{l}{\textit{Survey and policy parameters}} \\
% \hline
% non\_responsive\_share & Non-responsive population share              & $0.3$ \\
% $\psi_{\max}$ & Max vaccination recommendation                        & $0.95$ \\
% $\psi_{\min}$ & Min vaccination recommendation                        & $-0.05$ \\
% $\eta_{\max}$ & Max masking recommendation                            & $0.90$ \\
% $\eta_{\min}$ & Min masking recommendation                            & $-0.05$ \\
\midrule
\multicolumn{4}{l}{\textit{Initial conditions}} \\
\midrule
$\psi_{\text{init}}$ & Baseline vaccination rate                      & $0.05$  & ~\cite{Oguz2019}\\
$\eta_{\text{init}}$ & Baseline masking rate                      & $0.10$ & ~\cite{pnas2021}\\
$I_0$                & Initial infected count                        & $150$  & ~NA\\
$P(K_s)$             & Non-responsive population share               & $0.3$ & ~\cite{sciadv.adj0266}\\
\midrule
\multicolumn{4}{l}{\textit{Simulation parameters}} \\ 
\midrule
$K$           & Population size                                       & $10\,000$ & ~NA \\
$T$           & Simulation duration (weeks)                           & $26$ & ~\cite{NEJMc}\\
\bottomrule
\end{tabular}
\end{minipage}
\end{table*}

\subsubsection{Real-world data}
We evaluate the framework on real-world COVID-19 datasets.
The CHECKED dataset contains 332 verified-false COVID-19 Weibo posts and
26,501 timestamped receiver actions. The CoAID dataset contains 166 false
and 1,730 reliable COVID-19 items with associated tweets and replies. For
receiver-side accuracy evaluation, we use an intervention dataset with
25,627 headline ratings collected from 855 randomized participants.
For epidemic-state validation, we use CDC--HHS vaccination and
hospitalization trajectories covering 52 U.S. states over 27 weeks and
construct semi-synthetic reporting channels with controlled distortion
mechanisms. Additional Germany, U.S. survey, and Gujarat audit datasets
are included as supplementary validation cases.

\subsection{Experimental conditions.}
We evaluate the model over a full factorial grid: three equilibrium structures (separating, partial pooling, pooling) crossed with public-health policy (adaptive vs. random), baseline behavioral-rate level (low vs. high), and incentives for vaccination and masking.
Daily outputs are aggregated into weekly steps. Epidemiological and behavioral series are smoothed with a Savitzky–Golay filter, whereas strategic variables are updated by exponential smoothing to retain short-term fluctuations.

\subsection{Evaluation metrics.} We measure the $week_{control}$  when $R_c$ first falls $<$ 1 (converted to a disease-control score as $1-week_{control}/26$), deception rate, vaccination and masking coverage, peak hospitalization, sender utility (benefit from recommendations) and receiver distortion (cost from deviating signals). 

\subsection{Data and code.}
All code for running simulations, generating figures, and reproducing results is available at: \url{https://github.com/yiqisu/epi-signaling-games}.

\section{Supplemental experimental results}
\subsection{Comparison of adaptive and random policies across equilibria.}
\label{supp: policy_comparison}
Across all three information regimes, the adaptive policy consistently outperforms the random policy in epidemic control, driving lower $R_c$, smaller peak hospitalization, and better receiver utility. Moreover, the overall quality of control follows the expected ordering separating $>$ partial pooling $>$ pooling, with both peak hospitalization and equilibrium deception rates smallest under separating, larger under partial pooling, and largest under pooling. 

An interesting nuance is that when initial compliance is already high, sender utility under the random policy is often comparable to, or slightly higher than, under adaptive control—because in that regime the epidemic can be kept in check even with relatively mild, noisy recommendations, so the population “pays” less in terms of strong, targeted interventions while still enjoying reasonably low infection risk.

\subsection{Deception patterns}

\begin{figure}[h]
    \centering
    \includegraphics[width=\linewidth]{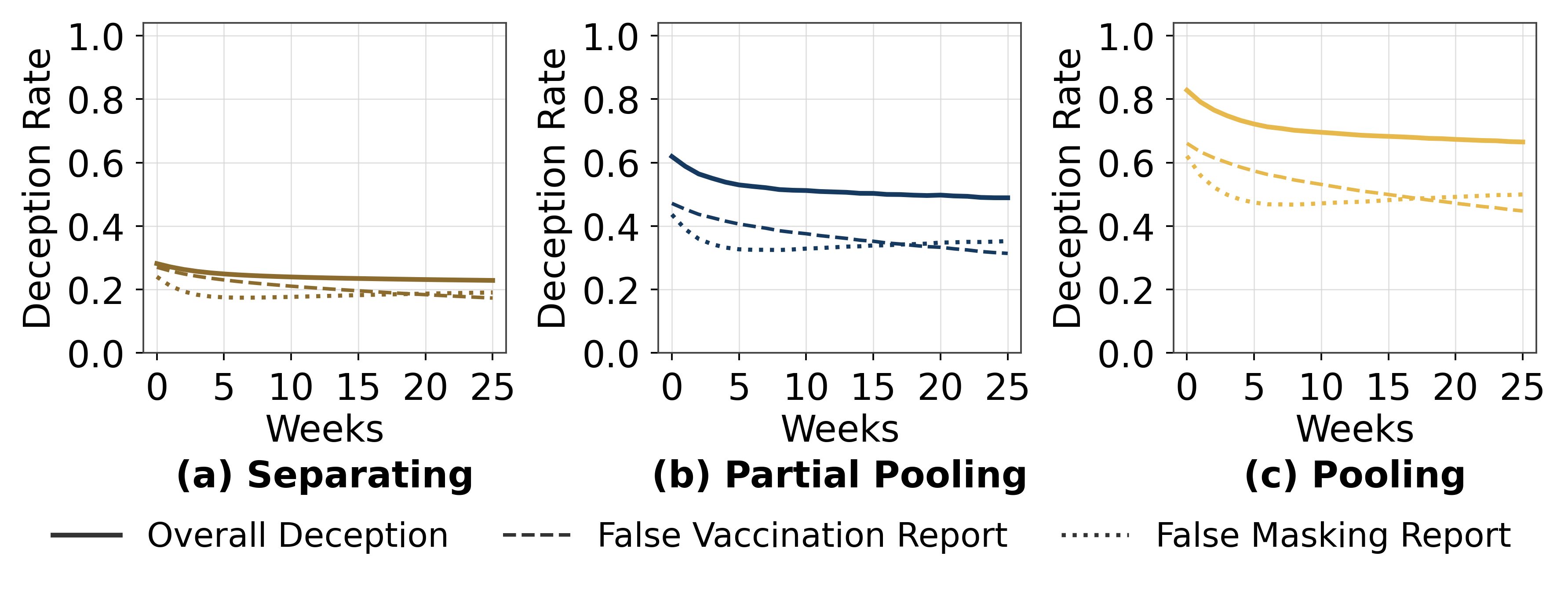}
    \caption{The deception rates across equilibria under high initial behavioral rates.}
    \label{fig:deception_high}
    % \vspace{-3mm}
\end{figure}
Figure~\ref{fig:deception_high}: The overall and separate strategy deception rates across equilibria under high initial behavioral rates.

% \subsection{Stress test}

% \begin{figure}[htbp]
%     \centering
%     \includegraphics[width=\columnwidth]{figures/stress_summary.png}
%     \caption{Sensitivity of epidemic outcomes to stress factors across equilibria.}
%     \label{fig:stress}
% \end{figure}  

% Figure~\ref{fig:stress}: Sensitivity of epidemic outcomes to stress factors across equilibria.

\begin{figure*}[h]
    \centering
    \includegraphics[width=\textwidth, height=5cm]{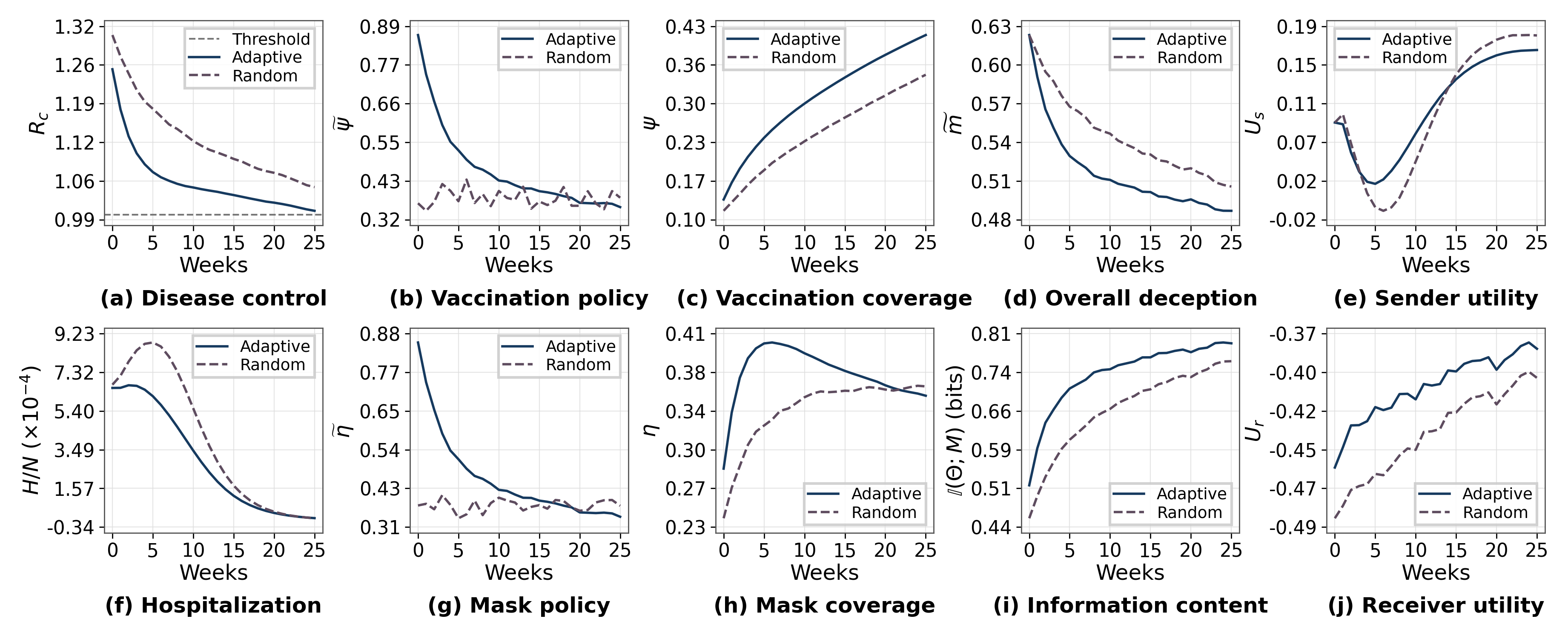}
    \caption{Comparison of adaptive and random policies under partial pooling with high initial behavioral rates, across epidemic dynamics, policy actions, signaling behavior, and utilities.}
    \label{fig:gov_policy_partial_high} %\vspace{-4mm}
\end{figure*}

\begin{figure*}[h]
    \centering
    \includegraphics[width=\textwidth, height=5cm]{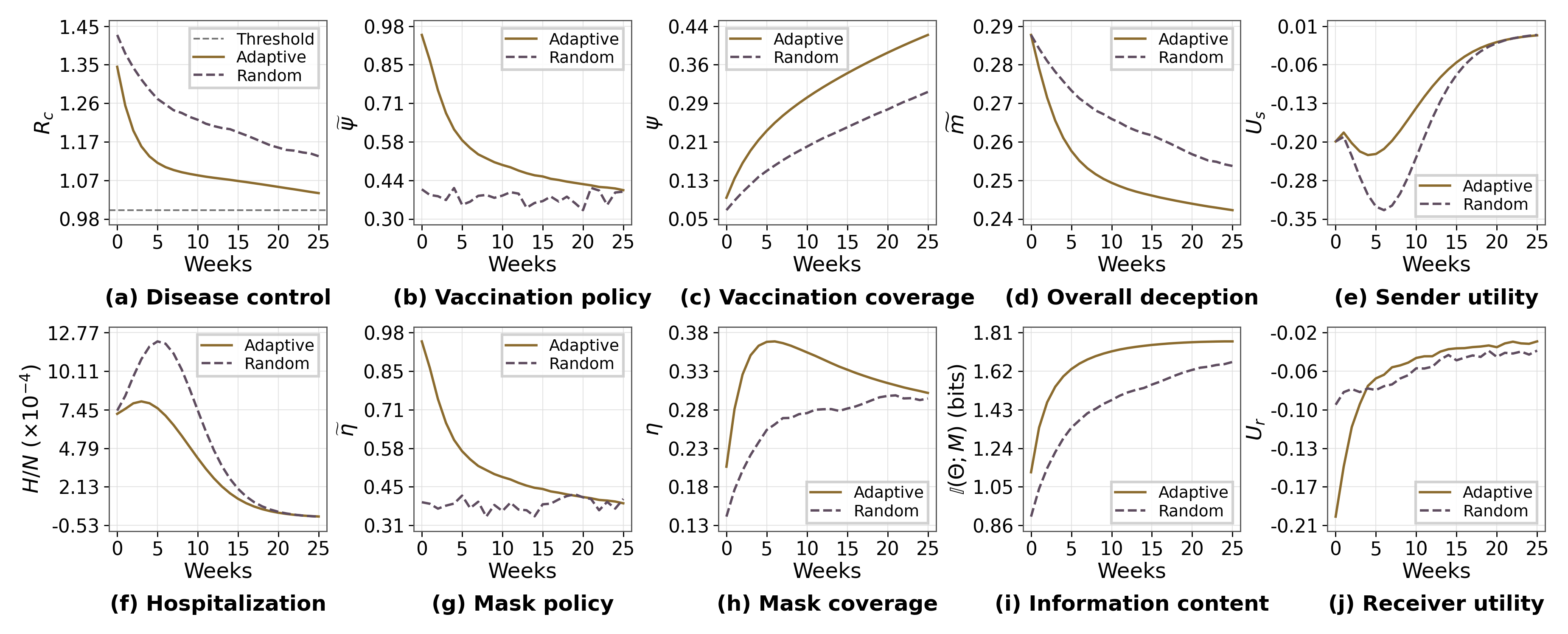}
    \caption{Comparison of adaptive and random policies under separating with low initial behavioral rates, across epidemic dynamics, policy actions, signaling behavior, and utilities.}
    \label{fig:gov_policy_partial_high} %\vspace{-4mm}
\end{figure*}

\begin{figure*}[h]
    \centering
    \includegraphics[width=\textwidth, height=5cm]{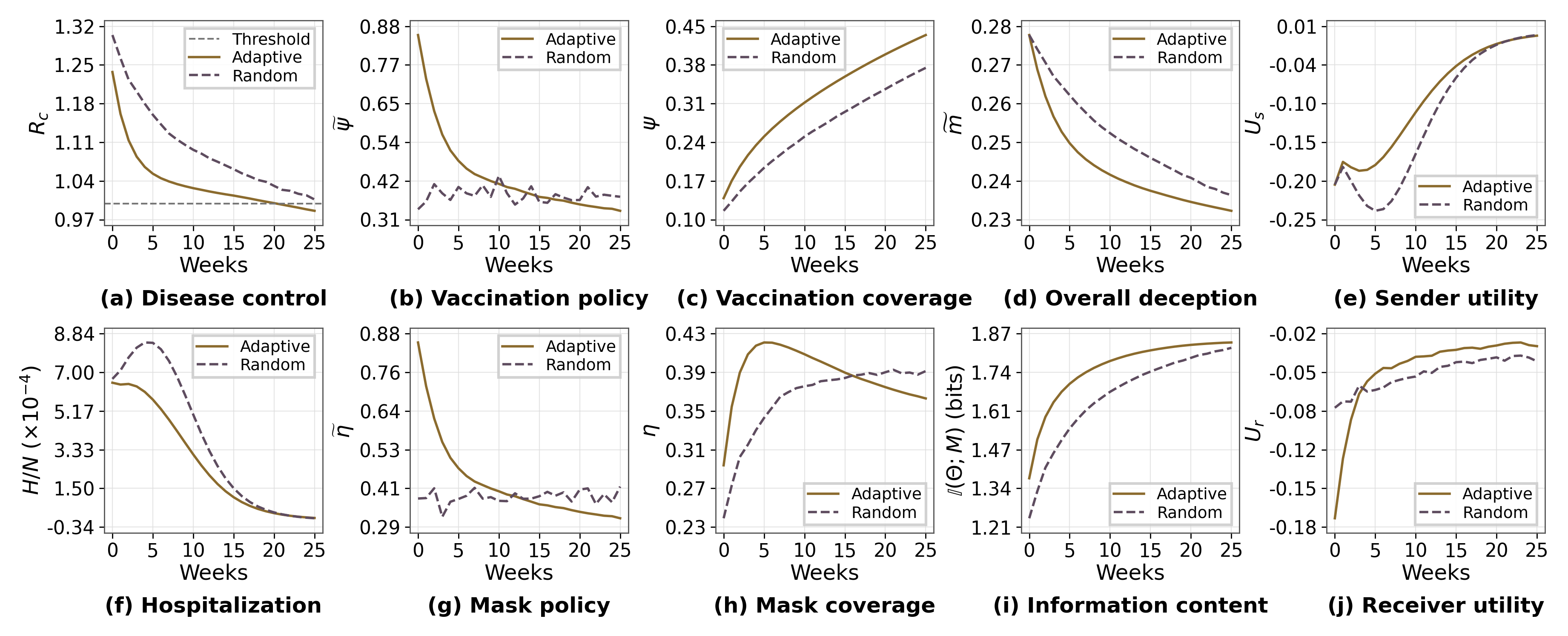}
    \caption{Comparison of adaptive and random policies under separating with high initial behavioral rates, across epidemic dynamics, policy actions, signaling behavior, and utilities.}
    \label{fig:gov_policy_partial_high} %\vspace{-4mm}
\end{figure*}

\begin{figure*}[h]
    \centering
    \includegraphics[width=\textwidth, height=5cm]{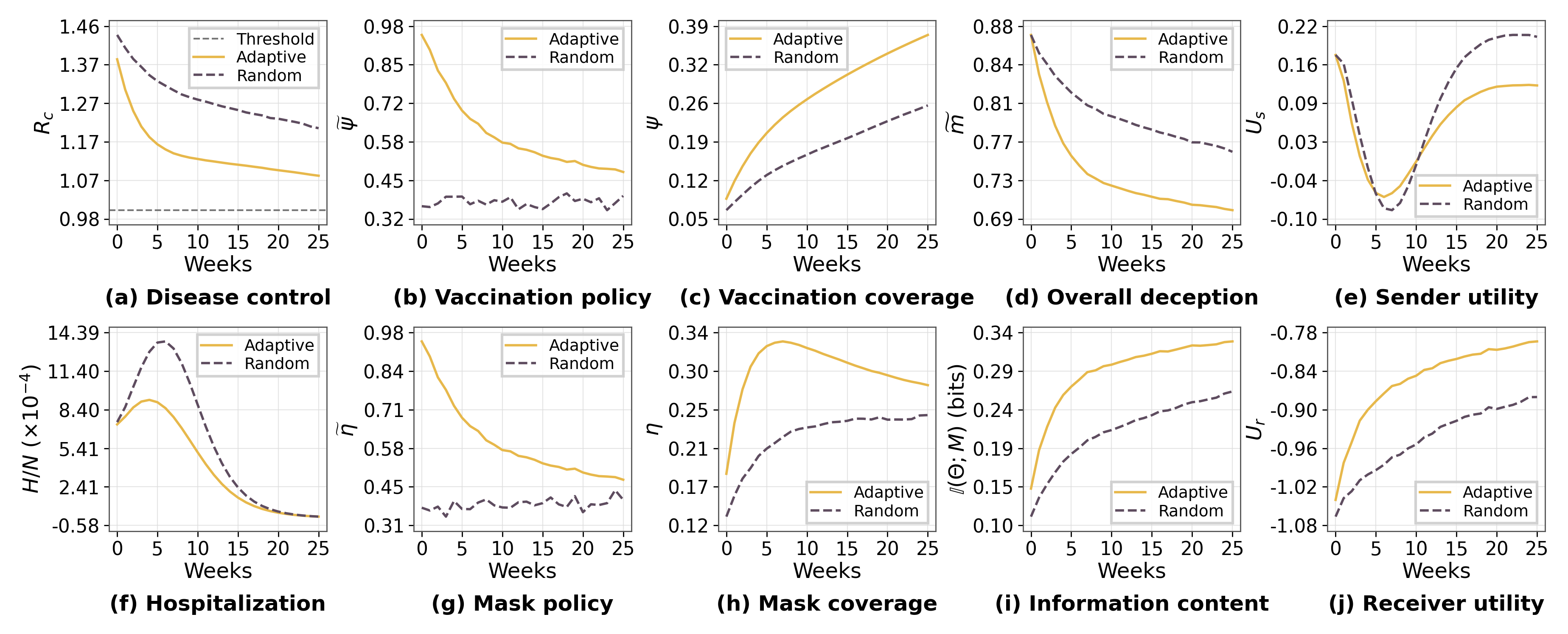}
    \caption{Comparison of adaptive and random policies under pooling with low initial behavioral rates, across epidemic dynamics, policy actions, signaling behavior, and utilities.}
    \label{fig:gov_policy_partial_high} %\vspace{-4mm}
\end{figure*}

\begin{figure*}[h]
    \centering
    \includegraphics[width=\textwidth, height=5cm]{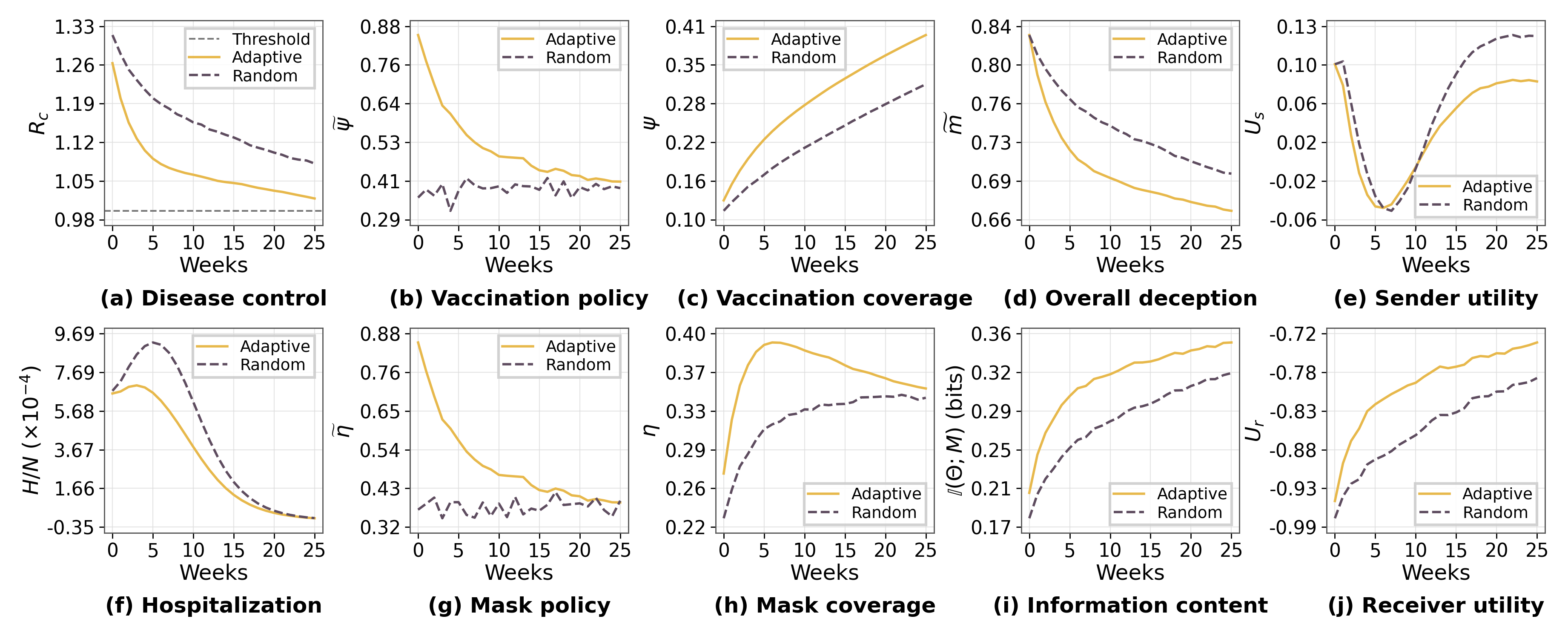}
    \caption{Comparison of adaptive and random policies under pooling with high initial behavioral rates, across epidemic dynamics, policy actions, signaling behavior, and utilities.}
    \label{fig:gov_policy_partial_high} %\vspace{-4mm}
\end{figure*}

% \bibliographystyle{ACM-Reference-Format}
% \bibliography{signaling_games}

\end{document}